\newtheorem{theorem}{Theorem}
\newtheorem{definition}{Definition}
\newcommand{\myoverset}[2]{\overset{\mbox{{\normalfont\tiny #1}}}{#2}}
\newcommand{\sate}{SAE\xspace}
\newcommand{\sates}{\sate{s}\xspace}
\newcommand{\victim}{alleger\xspace}
\newcommand{\victims}{allegers\xspace}
\newcommand{\accused}{accused\xspace}
\newcommand{\tEscrow}{Escrow\xspace}
\newcommand{\tEscrows}{Escrows\xspace}
\newcommand{\tescrow}{escrow\xspace}
\newcommand{\tescrows}{escrows\xspace}
\newcommand{\shares}[1]{\ensuremath{\llbracket #1 \rrbracket}\xspace}
\newcommand{\ZZ}{\ensuremath{\mathbb{Z}}\xspace}
\renewcommand{\paragraph}[1]{\smallskip\noindent{\bf #1}.}
\newcommand{\Fsate}{\ensuremath{\mathcal{F}_{SAE}}\xspace}
\newcommand{\A}{\ensuremath{\mathcal{A}}\xspace}
\newcommand{\fail}{\ensuremath{\bot}\xspace}
\newcommand{\ensurenotmath}[1]{\ifmmode{\text{#1}}\else{#1}\fi}
\newcommand{\FF}{\ensuremath{\mathcal{F}}\xspace}
\newcommand{\register}{\ensurenotmath{``Register''}}
\newcommand{\ID}{\ensuremath{\mathsf{ID}}\xspace}
\newcommand{\negl}{\ensuremath{\eta}} 
\renewcommand{\sim}{\ensuremath{\mathcal{S}}}
\newcommand{\adv}{\ensuremath{\mathcal{A}}}
\newcommand{\environment}{\ensuremath{\mathcal{E}}}
\title{Finding Safety in Numbers\\ with Secure Allegation Escrows}
\author{
  \IEEEauthorblockN{
    Venkat Arun\IEEEauthorrefmark{1}, Aniket Kate\IEEEauthorrefmark{2}, Deepak Garg\IEEEauthorrefmark{3}, Peter Druschel\IEEEauthorrefmark{3} and Bobby Bhattacharjee\IEEEauthorrefmark{4}\\}
\IEEEauthorblockA{
\begin{tabular}{cc}
  \begin{tabular}{c}
    \IEEEauthorrefmark{1}Massachusetts Institute of Technology \\
    {\tt venkatar@mit.edu}\\
    \IEEEauthorrefmark{3}Max Planck Institute for Software Systems \\
    {\tt \{dg, druschel\}@mpi-sws.org}\\
\end{tabular} & \begin{tabular}{c}
    \IEEEauthorrefmark{2}Purdue University \\
    {\tt aniket@purdue.edu}\\
    \IEEEauthorrefmark{4}University of Maryland\\
    {\tt bobby@cs.umd.edu}\\
  \end{tabular}
\end{tabular}
}
\vspace{-20pt}
}
\begin{document}

\maketitle

\begin{abstract}

For fear of retribution, the victim of a crime may be willing to report it only if other victims of the same perpetrator also step forward. Common examples include 1) identifying oneself as the victim of sexual harassment, especially by a person in a position of authority or 2) accusing an influential politician, an authoritarian government, or ones own employer of corruption. To handle such situations, legal literature has proposed the concept of an \emph{allegation escrow}: a neutral third-party that collects allegations anonymously, matches them against each other, and de-anonymizes allegers only after de-anonymity thresholds (in terms of number of co-allegers), pre-specified by the allegers, are reached.

An allegation escrow can be realized as a single trusted third party; however, this party must be trusted to keep the identity of the alleger and content of the allegation private. To address this problem, this paper introduces Secure Allegation Escrows (SAE, pronounced ``say''). A SAE is a group of parties with independent interests and motives, acting \emph{jointly} as an escrow for collecting allegations from individuals, matching the allegations, and de-anonymizing the allegations when designated thresholds are reached. By design, SAEs provide a very strong property: No less than a majority of parties constituting a SAE can de-anonymize or disclose the content of an allegation without a sufficient number of matching allegations (even in collusion with any number of other allegers). Once a sufficient number of matching allegations exist, the join escrow discloses the allegation with the allegers' identities.  We describe how SAEs can be constructed using a novel authentication protocol and a novel allegation matching and bucketing algorithm, provide formal proofs of the security of our constructions, and evaluate a prototype implementation, demonstrating feasibility in practice.
\end{abstract}

\section{Introduction}

In many cases, the victim or the witness of a crime may be too afraid
to accuse the perpetrator for fear of retribution by the
perpetrator. In other cases, particularly those involving sexual
harassment, the survivor may not report the crime anticipating
negative social consequences or further harassment by the
perpetrator. In such situations, the victim (or the witness) may find
it easier to act against the perpetrator if others also accuse the
perpetrator of similar crimes. Examples of this abound, a notable
example being the recent \#MeToo movement~\cite{metoo}, which led to
many public allegations of sexual abuse in the US film industry and
elsewhere, all triggered by the courage of an initial few.

An \emph{allegation escrow} aids such collective
allegations, by matching allegations against a common perpetrator
confidentially. Technically, an allegation escrow allows a victim or
witness of a crime to file a confidential allegation, which is to be
released to a designated authority once a pre-defined number of
matching allegations against the same party have been filed. The
identities of the accusers and the accused, as well as the content of
the allegation, remain confidential until the release condition holds.

Besides helping fearful victims to report crimes (safe
in the knowledge that their allegation will be revealed only as part
of a larger group), allegation escrows help improve reporting in cases
where the victim is uncertain if the perpetrator's actions constitute
a crime. Escrowed allegations also enjoy higher credibility since, to
all appearances, they are filed independently of each other (as
opposed to public allegations, where the credibility of subsequent
allegations may be questioned). In technical terms, allegation escrows
have been shown to mitigate the {\em first-mover disadvantage} that
perpetrators typically benefit from~\cite{information-escrow-law}.

Project Callisto~\cite{project-callisto} is an allegation escrow system
that has been deployed in 13 universities with over 100k students, to
help report sexual assault on university campuses. A victim can instruct
the system to release the allegation only when another allegation
against the same person exists. Sexual assault survivors who visit the
Callisto website of their university are 5 times more likely to report
the crime than those who do not, and Callisto has reduced the average
time taken by a student to report an assault from 11 to 4
months~\cite{project-callisto-report17}. This makes a very strong case
for the usefulness of allegation escrows.

However, existing allegation escrows such as Project Callisto are
implemented as a \emph{single} trusted third-party, similar to
ombuds-offices in many organizations. Although technically simple and
effective in many cases, the use of a single party may raise concerns
about the escrow's trustworthiness, impartiality and fallibility to
influential perpetrators, thus driving away potential users. In the
case of a university or corporate escrow (e.g. an ombuds-office), students or employees may be
unsure that an allegation against a high-ranking official would be
treated with integrity. A commercial escrow may raise concerns about
its independence from funding sources and long-term security, just as
a government-run escrow may raise concerns about its independence from
high-ups in law enforcement and the judiciary. In all these cases,
users may not trust the escrow enough to file allegations against
people they deem to have the power to coerce or compromise
the escrow. When they do file allegations, strong perpetrators may
actually abuse their power to prematurely discover escrowed
allegations, suppress and alter the allegations, or even seek
retribution against the victims. Finally, even if one victim trusts an
escrow, other victims of the same perpetrator may not, making it
impossible for the escrow to match their allegations.
This suggests the need for allegation escrows based on \emph{several
  independent parties}, none of which in itself is a single point of
coercion or attack by strong adversaries. In this paper, we present
a cryptographic design of such escrows.

\subsection{Contributions}
Our escrows, called \sates
(short for Secure Allegation Escrows), distribute client
secrets---confidential allegations and identities of the alleger and
accused---among several parties by threshold
secret-sharing~\cite{Shamir79}. These parties, called \tescrows, act
together and perform multi-party computations (MPCs) to provide the
same functionality as a single-party allegation escrow, but
compromising less than half of the \tescrows provides no information
about escrowed allegations, accusers or the accused. The \tescrows can
span diverse administrative, political and geographic domains,
mitigating the chances of simultaneous attacks over a majority by the
same adversary. To enable \sate, we make three key technical contributions.

First, \sate needs to provide a strong accountability property: Every filed allegation can be linked to a real-world (strong) identity, which is revealed to the concerned authority once the allegation has found enough matches. This discourages fake allegations and probing attacks all allegation escrows are susceptible to (see~~\S\ref{s:design-requirements}). Simultaneously providing accountability and privacy requires a nontrivial authentication protocol~(see~~\S\ref{proto:reg-and-auth}). When filing an allegation, no minority of \tescrows learn the identity of the filing user. But when the allegation is to be revealed, a majority can determine the identity.

Second, we need to efficiently match allegations to each other, even when each \tescrow only has shares of the allegations, {\it while} providing accountability. For this, we use a novel construction of distributed (verifiable) pseudorandom functions (DVRFs) over shared secrets. This is necessary, since traditional Oblivious PRFs will not provide accountability~(see~~see~~\S\ref{related-work}).

Third, \sate allows each alleger to decide their {\em reveal threshold}; how many matching allegations must be available before their allegation may be revealed. A set of matching allegations, $A$, is revealed if and only if all their thresholds are $\le |A|$. For instance, if three matching allegations with thresholds $\{2, 3, 5\}$ are filed, no allegation should be revealed, since 5's threshold isn't met. But if another allegation with threshold $3$ is filed, then the ones with thresholds $\{2, 3, 3\}$ (but not $5$) should be revealed. We design a novel bucketing algorithm to support reveal thresholds $> 2$  efficiently~(see~~\S\ref{proto:matching:thresholding}).

This flexibility is important since one size doesn't fit all. In many cases of sexual misconduct, a small threshold is desired to maximize the probability of a match; indeed Project Callisto which always uses a threshold of two---where an allegation is revealed if another matching one is filed---has demonstrated real-world utility. However, when the perpetrator is powerful, such as an influential politician, \victims may need many more corroborators to get justice while avoiding adverse consequences for themselves. Similarly, when accusing one's employer (or government) of misconduct or corruption, a person risks getting fired or persecuted. Having just 1 or 2 corroborators may not be much better than being alone. A threshold of 50 (or even 500 or 5000) could be more appropriate. \sate's flexibility allows each \victim to tailor the reveal threshold to the semantics of their allegation.

We formally prove the end-to-end security of our \sate cryptographic design in the universal composability (UC) framework~\cite{CanettiUC}. Specifically, we present an ideal functionality which, by definition, captures the expected security and accountability properties of a \sate, and then show that our cryptographic design \emph{realizes} this functionality. We also implement a prototype of  \sate to understand the latency and throughput of the system. We find that our design is efficient enough for typical use-conditions of allegation escrows.

To summarize, the contributions of our work are:
\begin{inparaenum}
\item The concept of \sate, a distributed allegation escrow, that
  is robust to compromise or coercion of minority subsets of
  constituting parties.
\item A cryptographic realization of \sates using verifiable secret sharing (VSS)
  and efficient multi-party computation (MPC) protocols. In particular, new
  protocols for user authentication, matching and bucketing allegations.
\item A formal security analysis of our cryptographic realization.
\item A prototype implementation and empirical evidence of
  reasonable performance in practice.
\end{inparaenum}

\subsection{Related Work}
\label{related-work}

Ayres and Unkovic~\cite{information-escrow-law} discuss the legal and social utility of allegation escrows in encouraging reporting of sexual misconduct. As discussed, Project Callisto~\cite{project-callisto} is a real deployment that uses a single trusted-party escrow for allegations of sexual misconduct, and has demonstrated the utility of such a system in university settings.

{\bf WhoToo}~\cite{popets-work} is a recent work that proposes a secure allegation escrow for allegations of sexual misconduct. Like \sate, it distributes trust among multiple parties using MPC. \sate differs from WhoToo in a two key ways. First, WhoToo forces all allegations to use a global pre-determined reveal threshold. As discussed above, this inflexibility limits its scope of application. To our knowledge, \sate is the first system to allow each allegation to have its own reveal threshold.

Second, if there are $N$ allegations already in the system, to file the $(N+1)^{th}$ allegation, WhoToo needs to perform $O(N)$ cryptographic operations, including $O(N)$ multi-party computations. \sate is more scalable. Its running time is $O(1)$, independent of the number of pre-existing allegations. We compare the compute complexity in detail in section~\ref{s:proto:prf-cost}.  To obtain such efficiency, \sate sometimes reveals which allegations match which others, before their thresholds are met. Nevertheless, an adversary is unlikely to be able to exploit this~(see~~\S\ref{proto:matching:thresholding})

{\bf Project Callisto} has also developed a prototype cryptographic solution to distribute the trust assumptions~\cite{callisto-crypto,callisto-crypto-2}. It uses a (potentially distributed) Oblivious PRF (OPRF) server. Allegers can query the server to learn the (deterministic) PRF of the accused's identity, while the server just learns the alleger's identity (not the accused's). The alleger uploads the PRF to a database server, which compares them, in clear-text, to match allegations. Callisto's security analysis is informal and has a weaker threat model that admits two attacks.

First, the OPRF server learns the alleger's identity. If a perpetrator compromises this server and learns that one of their victims filed an allegation soon after the crime, they may be able to deduce the probable content of the allegation. By contrast, in \sate, no minority set of \tescrows learn the identity of any \victim until enough matches are found.

Second, it doesn't hold allegers accountable, which allows the adversary to probe how many allegations exist against a given person. They may then guess who filed the allegation from context. To mount the attack, they query the OPRF server to learn the PRF of that person's identity, and compromise the database server to learn how many previously filed allegations match this PRF. \sate prevents this attack by ensuring that if a PRF of an accused's identity is computed, the identity of the alleger is irrevocably tied to the allegation. This enforces accountability and disincentivizes such attacks~(see~~\S\ref{s:design-requirements}).

\paragraph{Trusted Hardware} In recent work, Harnik et al.~\cite{me-too-sgx} use a hardware-backed secure enclave (built on Intel SGX) to isolate a fully autonomous, single-party allegation escrow. The ideas can be combined with \sate to obtain a threat model stronger than either: \sate's \tescrows can be hosted in SGX enclaves to provide a second line of defense even when the administrators of a \emph{majority} of \tescrows are acting maliciously.

\paragraph{Generic MPC and Covert Computation} Generic black-box MPC can also be used to solve the problem. However, like WhoToo, it too incurs at-least $O(N)$ cost per allegation, and doesn't scale. Like covert computation~\cite{covert-multi-party, covert-two-party}, \sate hides even the participation of a user in the protocol, revealing the result only if a pre-defined condition is met. However, black-box covert protocols don't scale well to large numbers of users, and require users to be online for matching to occur.

\section{\sate Design}
\label{design-space}

\subsection{Requirements for Secure Allegation Escrow}
\label{s:design-requirements}
A private allegation escrow system should provide the following
security and privacy properties:
\begin{asparaitem}
  \item {\bf Allegation secrecy}. The escrow should hold each
    allegation secret until enough matches are found. An allegation
    should be released only as part of a group of matching
    allegations.
  \item {\bf Alleger anonymity}. Similarly, the
    escrow should hold each alleger's identity secret until enough
    matches are found.
\item {\bf Scalability}. The escrow should scale to many allegers and allegations. In section~~\S\ref{s:sys-security} we discuss why scalability 1) helps avoid crippling DoS attacks, 2) increases probability of a correct match and, 3) enhances privacy by preventing timing side-channel attacks.

\item {\bf Accountability}. Each allegation is bound to a strong,
  real-world identity.  Once a match is found, the real identities of
  the matched \victims are revealed to the designated authority. Accountability discourages fake/bogus allegations, and acknowledges that the primary source of authenticity of an allegation, escrowed or otherwise, is the human backing it.
\end{asparaitem}

All allegation escrows (not just \sates) are fundamentally vulnerable to probing attacks where the adversary files fake probe allegations against itself in the hope of revealing other genuine allegations before sufficiently many genuine matching allegations have been filed. E.g. the adversary may be a guilty perpetrator seeking vengeance or a troll/journalist seeking a story. While the ultimate defense against such attacks lies in preventing this kind of abuse by non-technical means (e.g., by criminalizing probe allegations), \sates aid such defenses through the property of accountability, which ensures that the real-world identities of all \victims, including fake allegers, are revealed to the designated authority after a match. For this to work, we assume the adversary is afraid of law and/or public perception.

Accountability doesn't deter an adversary who knows their allegation (and hence identity) will never be revealed, perhaps because their reveal threshold is too high or their allegation is unlikely to match any others. We ensure a probe is useful for discovering the presence of only those allegations, that would be revealed at the same time as the probe itself (see~~\S\ref{proto:matching:thresholding}). Hence the probe is just as likely to be revealed as the victim allegation.

Additionally, allegation escrows are most useful in asymmetric
situations, where individual \victims are at a disadvantage compared
to the \accused.  Allegation escrows enable the \victims to build
``strength in numbers'' without fear of premature retaliation.
However, the very information held by allegation escrows motivate
powerful attacks against them, since the \accused can gain by learning
about \victims before a large enough group has formed.  Thus,
allegation escrows should {\em expect\/} to be a targeted.
This leads to the following {\bf meta-property}, that spans
the previous properties.
\begin{asparaitem}
\item {\bf Robustness}. The escrow should resist coercion and
  compromisation attacks. It should continue to provide the properties
  above even if some constituent parts are compromised or
  willingly cooperate with the adversary.
\end{asparaitem}

\begin{figure*}[ht]
  \centering
  \includegraphics[width=0.8\textwidth]{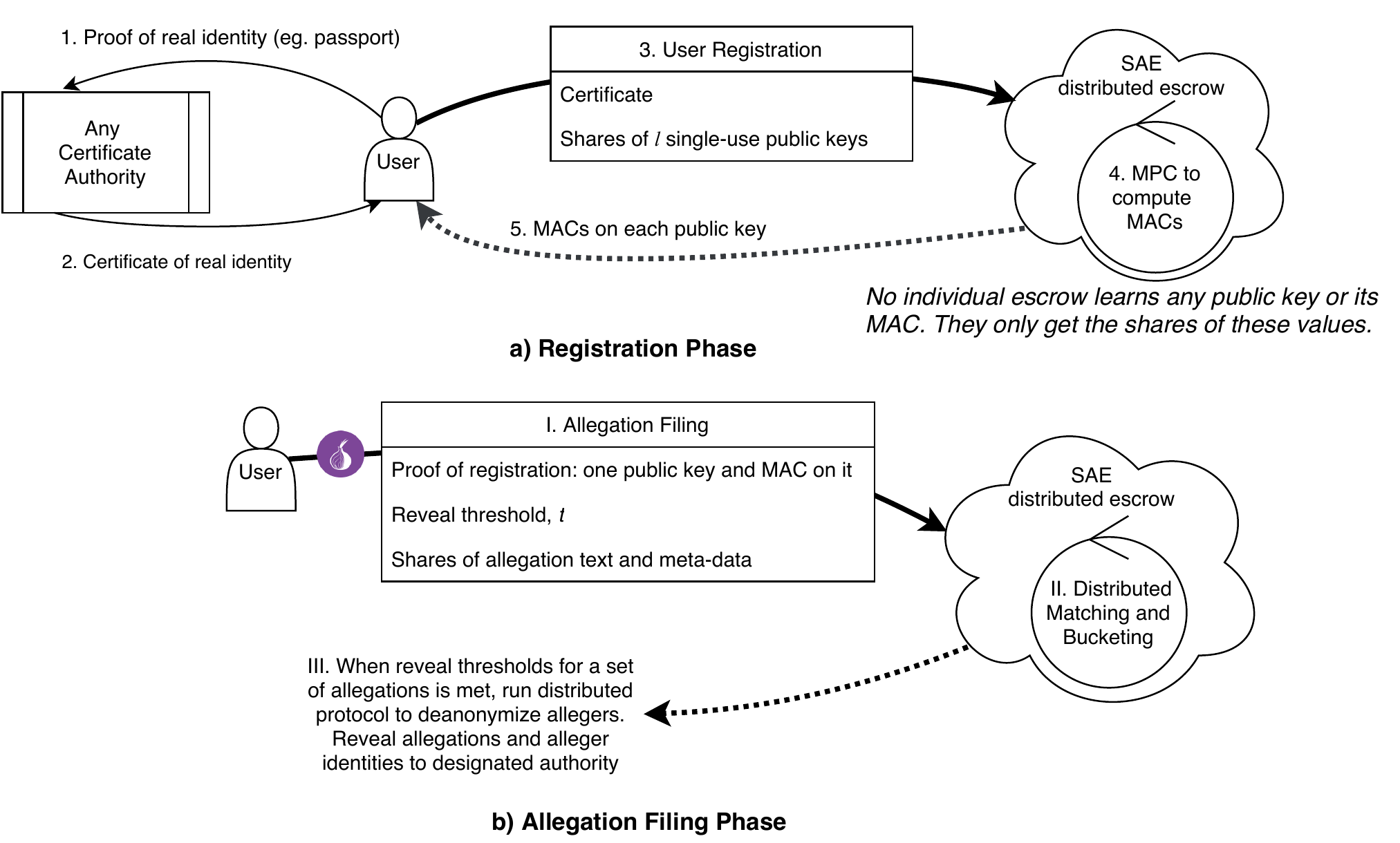}
  \caption{An overview of the \sate protocol. The figure shows a) the user
    registration phase and b) the allegation filing phase. Numbers
    indicate the order in which operations are performed. Thick-continuous and
    thick-dotted lines indicate one-to-many and many-to-one communication respectively. The line in b) with the `Tor' symbol denotes an anonymous channel.}
  \label{fig:overview}
\end{figure*}

\subsection{Threat model and assumptions}
A \sate adversary is interested in prematurely learning the identities
of one or more \victims or discovering unrevealed allegations. For
instance, the adversary may be a guilty perpetrator, interested in
determining whether there is any allegation against them. Or, they may be journalists/trolls looking for a story against a famous person. To this end,
an adversary may {\em actively} compromise some \tescrows into revealing
information they hold and/or not following the \sate protocol
correctly. By design, \sates are robust to such attacks on up to half
the \tescrows simultaneously: allegation secrecy, \victim anonymity, scalability
and accountability hold even if the adversary learns all cryptographic
and allegation-related material possessed by up to half the \tescrows,
and causes them to behave arbitrarily.
Moreover, we expect the compromised
\tescrows to be {\em malicious but cautious}; i.e., we expect the \sate protocol
to catch any malicious behaviour, ensuring that the compromised nodes continue to follow
protocol (say, to avoid detection and removal by the honest majority) and \sate remains live---it continues to offer the expected functionality.

We make the standard assumption that the adversary cannot break
cryptography. Technically, the adversary is a probabilistic polynomial
time (PPT) algorithm with respect to a chosen security parameter
$\lambda$. We assume, as usual, that uncompromised parties (\tescrows
and \victims) keep their long-term secrets safe.

For \victim anonymity, we assume that \victims do not reveal any information beyond that explicitly mentioned in our protocols. For example, they should hide their IP addresses using standard network anonymity solutions like Tor~\cite{tor}. To ensure the time of allegation filing doesn't reveal extra information, honest \tescrows regularly file `garbage' allegations at random times. These are indistinguishable from genuine allegations, and hence serve to hide them. \sate's scalability ensures that this doesn't hurt performance significantly.

\subsection{Protocol Overview}
\label{s:overview}
Figure~\ref{fig:overview} shows high-level protocol flow for the \sate
protocol. We describe the individual protocols for each of these stages in~~\S\ref{s:construction}.

\paragraph{Registration}
\sate uses real-world (strong) identities to ensure accountability.
Prior to registering with \sate, a user proves their real identity to a
certifying authority (CA) and gets its signature on their public
key. The CA may be the user's employer or university registering all
its employees and students into the system, or even an independent
entity verifying physical identities like passports.

To register with a \sate, the user authenticates to all \tescrows
using the CA certificate. The \tescrows and the user then run a
cryptographic protocol during which the user gets
authentication tokens (in particular, MACs) on a fixed number $l$ of fresh public keys. Each of
these $l$ keys can be used once to file an allegation. Importantly, the \tescrows only learn individual \emph{shares}
of these keys, but neither the full keys, nor the MACs on them. This
prevents the \tescrows from learning the identity of a user when the
user files an allegation later, but allows a majority of \tescrows to
reconstruct the identity (by pooling their shares of the public key)
when an allegation has to be revealed.

For their own benefit, users should register ahead of time, even when
they see no need to file an allegation. This prevents timing
correlation channels.  For example, if an \accused is expecting an
allegation due to a recent incident, and colludes with a \tescrow,
then the act of registration by the potential \victim may provide a
strong hint of a pending allegation. Ahead of time registration removes
this channel of inference and could be enforced, for instance, by a
company asking its employees to register with an allegation escrow
service as soon as they join the company

\paragraph{Allegation filing}
When the user wants to file an allegation, they contact the \tescrows,
providing one of the $l$ public keys and the MAC on it, which the
\tescrows can verify. The verification tells each \tescrow that this
user has registered before, but doesn't reveal the
identity of the user, since no \tescrow has seen these in cleartext before. After this, the user provides
the allegation's text along with some meta-data in a specific
cryptographic form, and a \emph{reveal threshold}---the minimum number
of allegations that must match before this one is revealed.

\paragraph{Matching, thresholding and revelation}
The material provided with each allegation is fed into a novel matching and
bucketing algorithm. This algorithm matches allegations to each other. As soon as a set $A$ of matching allegations, each with a reveal
threshold $\le |A|$ (the size of $A$), is found,
they are revealed to a designated authority for further
action. The revelation contains the real identities of the \victims
and the full texts of their allegations. The designated authority can
then take appropriate action.

\section{Threshold Cryptographic Tools}
\label{crypto}

In this section, we present threshold cryptographic protocols that
we use in \sate.  We first describe the necessary threshold
primitives, and then design the required distributed versions
of the signing and private matching protocols.

\subsection{Multi-Party Computation}
\label{crypto:mpc}
An MPC protocol enables a set of parties $\{P_1, P_2, \dots, P_n\}$ to jointly
compute a function on their private inputs in a privacy-preserving
manner~\cite{Yao::82,Chaum::88,Ben-Or::88,Goldreich::87}. More formally, every
party $P_i$ holds a secret input value $x_i$, and $P_1,\ldots,P_n$ agree on some
function $f$ that takes $n$ inputs and provide $y =
f(x_1, \dots , x_n)$ to a recipient while making sure that the following two
conditions are satisfied: $(i)$ {\it Correctness:} the correct value of $y$ is
computed; $(ii)$ {\it Secrecy:} the output $y$ is the only new information that
is released to the recipient.

An $(n, f)$ Shamir secret sharing~\cite{Shamir79} allows a dealer to
distribute shares of a secret among $n$ parties $ \{ P_1, \ldots, P_n\}$
such that any number set of $\le f$ shares reveals no
information about the secret itself, while an arbitrary subset of shares larger than
$f$ allows full reconstruction of the shared secret. Since in some
secret sharing applications the dealer may benefit from behaving maliciously,
parties also require a mechanism to confirm that each $f+1$ subset of shares
combine to form the same value. To solve this problem, Chor {\em et
  al.}\ \cite{cgma85} introduced verifiability in secret sharing, which led to
the concept of {\em verifiable secret sharing} (VSS)~\cite{pedersen-secret,
  Feldman87, pederson-mpc, BackesKP11}.

In our construction we use the MPC protocol by Gennaro \emph{et
  al.}~\cite{pederson-mpc}. It uses VSS, where
Pedersen commitments~\cite{pedersen-secret} on
the Shamir shares are provided to all parties. It works on secrets in
a prime-order ring $\ZZ_q$ and a multiplicative group $\mathbb{G}$ of
order $q$ of size linear in the security parameter $\lambda$.

\subsection{MPC Tools and Notation}
\label{crypto:notation}

\paragraph{VSS and MPC Notation} We denote the $n$ shares of a secret value
$s$ by the set $\shares{s} = \{ \shares{s}_1, \ldots, \shares{s}_n\}$, where
$\shares{s}_j$ represents the VSS share of party $P_j$. In \sates, we use $n = 2f + 1$.

As the employed VSS protocol is additively homomorphic, operations
$\shares{x_1+x_2}_j = \shares{x_1}_j + \shares{x_2}_j$, and
$\shares{cx_1}_j = c\shares{x_1}_j$ for a known constant $c \in \ZZ_q$ can be computed
by each $P_j$ locally using her shares $\shares{x_1}_j,
\shares{x_2}_j$.  The
computation of $\shares{xy}_j$ from given $\shares{x}_j$, $\shares{y}_j$ is
an interactive process and requires cooperation from $2f+1$
parties~\cite{pederson-mpc}. This protocol has identifiable abort~\cite{IshaiOZ14} and can identify which
the non-cooperating parties (if any) are; thus, a malicious-but-cautious party will always cooperate.
We formalize identifiable abort as
${\tt IdentifiableAbort}(i)$, where $i$ indicates that
$P_i$ either offered wrong or no input.\footnote{For MPC with identifiable abort and $2f+1$ nodes, parties can reshare their non-cooperating party's share
with the reduced compromisation  threshold of $f-1$.}

Given threshold addition and multiplication, we can efficiently perform some additional
operations.
In the following, we list the employed VSS and MPC operations. These functions are cooperatively called by each \tescrow with their share of the inputs. When enough \tescrows cooperate, the functions return their values.

\begin{asparaitem}
\item \textproc{VSS}$(x)$: Verifiably secret share $x$ among all the \tescrows such that $f+1$ of them can reconstruct $x$, but no fewer can~\cite{pederson-mpc}. In \sate, $f = \lfloor (n+1)/2 \rfloor$.
\item \textproc{CombineShares}($\shares{x}_j$): Broadcast ${\shares{x}}_j$, gather shares from other $\ge f+1$ parties, and reconstruct the secret $x \in \mathbb{Z}_q$ if at-least $f+1$ honest shares are available.
\item \textproc{RandomCoinToss}(): \:\: Return a share $\shares{r}_j$ of $r \in \ZZ_q$ chosen uniformly at random using
distributed key generation~\cite{DKGJournal,DKG}.
\item \textproc{PublicExponentiate}$(g, \shares{x}_j,$ {\tt recipients}): \:\: Compute $g^x$, where $x$ is secret-shared and $g \in \mathbb{G}$ or $g \in \mathbb{G}_T$ are publicly known generators of bilinear groups~(see appendix~\ref{app:crypto:bilinear}). The result is revealed as clear-text {\it only} to {\tt recipients}, which is a set of parties. In our protocol, {\tt recipients} is either a given client or the set of all the escrows. This operation can be done efficiently with interaction since the result is revealed in the clear, not in secret shared form.
\item \textproc{DVRF}($\shares{SK}_j$, $\shares{x}_j,$ {\tt flag\_proof}, {\tt recipients}): \:\: Return the VRF $F_{SK}(x)$ to {\tt recipients}~(see~~\S\ref{crypto:prf}). If {\tt flag\_proof} is true, also return the proof $\pi_{SK}(x)$, along with the VRF.
\item \textproc{VerifyVRF}$(PK, \pi, x)$: \:\: Verify that $\pi = \pi_{SK}(x)$, where $SK$ is the secret key of the VRF~(see~~\S\ref{crypto:prf}) corresponding to $PK$. Unlike above, this function can be executed locally by each \tescrow without interaction.
\end{asparaitem}

\subsection{Distributed Verifiable Pseudorandom Functions (DVRFs)}
\label{sec:Distributed}
\label{crypto:prf}

\paragraph{VRFs} A verifiable pseudo-random function is a pseudo-random function $F_{SK}(x)$, along with a proof function $\pi_{SK}(x)$. A PPT adversary cannot distinguish $F_{SK}(x)$ from a random function if it doesn't have access to $SK$ or $\pi_{SK}(x)$. However, given $\pi_{SK}(x)$ and a public key $PK$, a PPT can verify that $F_{SK}(x)$ was computed correctly. The formal definition of VRFs is given in Appendix~\ref{app:crypto:vrf}.

In \sate, we need to compute VRFs in a multi-party computation where both the
key and the input values are available in a secret shared form. Any VRF
scheme can be transformed using general purpose MPC to work with shared key and
shared input tags. However, keeping efficiency and practicality in mind, we
choose a VRF construction by Dodis and Yampolskiy from \cite{pairing-vrf}.

In this construction, if a q-Decisional Bilinear Diffie Hellman
Inversion (q-DBDHI) assumption holds in a bilinear group $\mathbb{G}$
with generator $g$~(see Appendix~\ref{app:crypto:bilinear}), then
\begin{equation}\label{eq:PRF}
  F_{SK}(x) = e(g, g)^{1/(x+SK)}
\end{equation}
is a PRF. When coupled with a proof $\pi_{SK}(x) = g^{1/(x+SK)}$, it is a VRF.
Here, $SK$ is a private key chosen randomly from $\mathbb{Z}_q$, and $PK =
g^{SK}$. To verify whether $y = F_{SK}(x)$, we can test whether $e(g^x\cdot PK,
\pi) = e(g, g)$ and whether $y = e(g, \pi)$.

\paragraph{Distributed Input VRF} We need a distributed protocol for computing a VRF. However, we could not use distributed VRF (DVRF) schemes in~\cite{dprf-1,dprf-2,dprf-3} as, in \sate, the VRF computing parties (the \tescrows) know the input only in a secret-shared formed (this will become clear in \S\ref{s:construction}). So, we design a DVRF with secret-shared (or distributed) input messages. Our construction may be of independent interest to other distributed security systems.

A set of $2f + 1$ \tescrows can efficiently compute $F_{SK}(x)$ or $\pi_{SK}(x)$ if each has a share of $x$ and $SK$ as shown in Algorithm~\ref{protocol:algo-pseudorandom-computation}. Here the result is sent only to {\tt recipients} (either {\it all-escrows} or the given {\it client}). If {\tt flag\_proof} is false, \textproc{DVRF} computes $F_{SK}(x) = e(g,g)^{1/(x+SK)}$. Else, it computes $\pi_{SK}(x) = g^{1/(x+SK)}$. $g \in \mathbb{G}$ is a group generator. Given $\pi_{SK}(x)$, the recipient can compute $F_{SK}(x) = e(\pi_{SK}(x), g)$. If an \tescrow refuses to cooperate, the other \tescrows can determine the identity of the corrupted escrow.

\begin{algorithm}
  \caption{Efficient MPC algorithm to compute DVRFs.}
  \label{protocol:algo-pseudorandom-computation}
  \begin{algorithmic}
    \Function{\textproc{DVRF}}{$\shares{SK}$, $\shares{x},$ {\tt flag\_proof}, {\small \tt recipients}}
    \State{$\shares{t_1} \longleftarrow \shares{SK} + \shares{x}$}
    \State{$\shares{blind} \longleftarrow$ \Call{RandomCoinToss}{}()}
    \State{$(\llbracket t_2 \rrbracket, {\tt IdentifiableAbort(i)}) \longleftarrow \shares{t_1} * \shares{blind}$}
    \State{$t_2 \longleftarrow$ \Call{CombineShares}{$\shares{t_2}$}}
    \State{$\shares{exp} \longleftarrow t_2^{-1} * \shares{blind}$}
    \If{{\tt proof\_flag} = {\tt True}}
    \State \Call{PublicExponentiate}{$g, \shares{exp},$ {\small \tt recipients}}
    \Else
    \State \Call{PublicExponentiate}{$e(g,g), \shares{exp},$ {\small \tt recipients}}
    \EndIf
    \EndFunction
\end{algorithmic}
\end{algorithm}

Algorithm~\ref{protocol:algo-pseudorandom-computation} first inverts
$\shares{x}+\shares{SK}$ which takes two multiplications, and then exponentiates
it. The only values available in clear-text (i.e., not information-theoretically
hidden by the secret sharing) are $t_2$ and the final output. $t_2$ is uniformly
distributed and independent of the input, since it is blinded. Hence this
algorithm does not reveal any information about the inputs beyond what is
revealed by the output.

\section{\sate Construction}
\label{s:construction}

In this section, we present the detailed cryptographic protocols we use to
implement a secure allegation escrow. A formal summary of the protocol is given in Figure~\ref{fig:sate-real}.

\subsection{Format of an Allegation}
\label{proto:alleg-format}

An allegation escrow must have some mechanism to determine whether or
not two allegations match. To allow this, along with free-form text
describing their allegation, \victims provide structured meta-data
describing the allegation. \tEscrows deem that two allegations match
if their meta-data are identical. Although simple, this mechanism is
quite effective---it is also used in
Callisto~\cite{project-callisto}, a deployed (non-cryptographic) escrow. Matching should be unambiguous, since false positive matches can cause allegations to be revealed prematurely. Unlike more sophisticated matching criteria, equality tests are simple and robust.

Allegation meta-data is a formatted string containing specific
fields. For instance, it could contain: 1) identity of the \accused
and, 2) the type and intensity of a crime. The identity can be
specified either as a name or as a unique identifier, if available. In
an institutional setting for instance, the user could select from a
drop-down list of other employees/students in that institute. The
`type and intensity' of crime is selected from a drop-down list
containing entries like `sexual harassment', `sexual assault', `petty
theft', `fraud ($<\$10^3$)', `fraud ($\ge \$10^3, < \$10^6$)', `fraud
($> \$10^6$)' and `racial discrimination by a person in power'.


Along with the meta-data and free-form text, the user also submits a
reveal threshold---the lowest number of matching allegations that must
be revealed along with this (or before) this one. Unlike prior
work~\cite{callisto-crypto, me-too-sgx, popets-work}, which only supports a {\it single}
matching threshold throughout the system, we allow the user to pick a
threshold to their own satisfaction with each allegation.

\subsection{Initialization}
All \tescrows register with a standard PKI. They use this to form secure, two-way authenticated TLS links between every pair of \tescrows. These are used for all inter-\tescrow communication. During both registration and filing, the user and \tescrows use a session ID to ensure all all \tescrows are talking to the same user.

The \tescrows use \textproc{RandomCoinToss}() to generate individual shares of private DVRF keys that are later used to 1) register and authenticate users $SK_I$, 2) reveal user identities when required $SK_R$ and 3) match allegations in each bucket $i$, $SK_i$. Since there are infinitely many buckets $i \in 0,1,\ldots$, $SK_i$ is generated lazily when required. How these keys are used will be explained later. The public component of $SK_I$, $PK_I$ is also generated using \textproc{PublicExponentiate} and publicly published. All shares use a fixed recombination threshold of $f + 1 = \lfloor \frac{n+1}{2} \rfloor$, so all the escrows must cooperate to perform operations with these keys, and any minority can be compromised by an adversary without violating any of \sate's properties.

\subsection{User Registration, Allegation Filing and Revelation}
\label{proto:reg-and-auth}

\paragraph{Registration}
The user first obtains a certificate of real identity (e.g., their passport/employee ID) from an appropriate certificate authority (e.g., their employer). This authority is trusted to verify the identity of the user in the real world, denoted as {\tt ID}. During registration, user forms a two-way authenticated TLS link with each \tescrow using this certificate, and non-repudiably signs all communication during registration.

The user generates $l$ random one-time public-private key pairs and secret shares the public parts, $pk_1, \ldots, pk_l$ among the \tescrows. Each of these can be used to file one allegation later.

Using the MPC protocol for \textproc{DVRF}, the \tescrows compute a MAC on each of these public keys $pk_i$, as $(F_{SK_I}(H_2(pk_i)), \pi_{SK_I}(H_2(pk_i))$ using their secret-shared private DVRF key generated at initialization, $SK_I$. $H_2$ is a collision resistant hash function from the set of public keys to $\mathbb{Z}_q$. Each \tescrow learns only its share of the public key and its share of the computed MAC, while the registering user (and not any of the \tescrows) learns the full MAC.

The \tescrows also compute a PRF $F_{SK_R}(H_2(pk_i))$ using a different private key $SK_R$. Individual \tescrows learn the PRF, but nothing else. Each \tescrow stores the association between the user's real-world identity and $F_{SK_R}(H_2(pk_i))$ in a local map. This association is used when revealing allegations later.

At the end of the registration, every \tescrow knows the real user,
but knows only one share of each of the public keys $pk_i$ the user
provided and one share of the MAC computed on it. Consequently, when
presented with one of these public keys and its MAC later, no minority
of \tescrows can link the key back to a specific registered user.

\paragraph{Allegation filing}
A registered \victim files an allegation by connecting to the
\tescrows over an anonymous communication channel
which is modelled by the functionality $\FF_\mathsf{anon}$. (e.g., see \cite{formal-onion}) It anonymously delivers messages to users in the network.
The \tescrows' identities are authenticated with standard PKI.

During the filing, the alleger chooses a random public key $pk$ from the set previously registered and submits 1) $pk$, 2) $\pi_{SK_I}(pk)$, the \tescrows' MAC on it, 3) the
allegation's full text encrypted with a fresh symmetric key $k$, $Enc_k(a)$ 4) shares of $k$, 5) shares of a hash of the allegation's meta-data $m=H_1(\mathrm{meta-data})$, where $H_1: \{0,1\}^* \rightarrow \mathbb{Z}_q$ is a collision resistant hash, 6) an arbitrary reveal threshold $t$, and 7) signatures on all the above with the private key corresponding to $pk$.\footnote{For the formal security, we demand $Enc_k(\cdot)$ to be non-committing encryption. We define it formally in Appendix~\ref{sec:uc-security-proof}, and refer \cite{Chou2015} for simple construction.}

Since no \tescrow has seen the whole public key $pk$ or the entire MAC
on it before, no \tescrow can link it back to any specific
user. However, all \tescrows can locally verify with \textproc{VerifyVRF} that the MAC on the public key
is legitimate and, hence, that the public key comes from a user who
has previously registered. This verification only requires local
computation by each \tescrow and no MPC, which improves efficiency.

Note that no \tescrow has enough information to reconstruct the
allegation, its meta-data or the identity of the \victim. A
majority must cooperate to reconstruct any of these. This ensures the
properties of allegation secrecy and \victim anonymity
(\S\ref{design-space}), even if a minority of the \tescrows cooperate
with the adversary.

\paragraph{Allegation revelation}
Allegations are matched using a dedicated algorithm by the
\tescrows. The algorithm is described in~\S\ref{proto:matching}.  Once
a majority of \tescrows determine that a set $A$ of matching
allegations can be revealed, i.e., they all have thresholds $\le
|A|$, the \tescrows combine their shares to decode the symmetric keys used to
encrypt the texts of the allegations in $A$. These texts are provided
to a designated authority for further action.

Along with the allegation texts, the \tescrows also reveal the
real-world identities of the \victims who filed $A$. To obtain the
identity of an \victim, the \tescrows compute the PRF $F_{SK_R}(pk)$
(using \textproc{DVRF}, the algorithm described in~\S\ref{crypto:prf}), on the public
key $pk$ the \victim used to file the allegation. Recall that the
\tescrows also computed this PRF when the \victim registered and
mapped the PRF to the victim's identity in a local store. Hence, to
discover the user's identity, they merely need to look up the PRF in
the store. This search is done in clear-text locally by each
individual \tescrow and is efficient.\footnote{Note, that we just use a PRF and not the verifiability property of our VRF here.}

Providing the real-world identities of the matched \victims to the
designated authority allows the authority to reach out to the \victims
and also provides the accountability property from
\S\ref{s:design-requirements}.

\paragraph{Registered public keys must not be used twice}
As just described, after an allegation filed with public key $pk$ has
been matched and revealed, the \tescrows map $pk$ to the strong
identity of the individual. Consequently, the key $pk$ should not be
used to file a second allegation unless the \victim wishes to
de-anonymize itself to the \tescrows. To allow users to file multiple
allegations anonymously, a user registers $l$ different keys during a
single registration. This can be repeated periodically, allowing for
$l$ allegation filings for every user within each period. For
instance, every participating individual may register $10$ public keys
every year, thus allowing every user $10$ allegation filings every
year.

\subsection{Matching and Thresholding}
\label{proto:matching}

Now, we discuss how the \tescrows match allegations to each other and reveal sets of
matching allegations when thresholds are met.

\subsubsection{Matching protocol}
\label{proto:matching:matching}
We describe a simple MPC protocol that matches two allegations when
their meta-data hashes are equal. We start by noting that, by design,
our matching protocol does not allow any minority set of \tescrows to
match two allegations on their own. Recall that each \tescrow receives
only a share of the (collision-resistant) hash of the meta-data, $H_1(m)$, of each
allegation. The shares are randomized, so a minority of \tescrows
cannot check the equality of $H_1(m)$ and $H_1(m')$ using the shares
alone. This property is important, else, an adversary who corrupts a
minority of \tescrows can probe existing allegations to discover if an
allegation against a specific individual exists. They can do this
without any honest parties being aware of such probing.

To compare a set of allegations for equality, all the \tescrows
participate in \textproc{DVRF}~(see~\S\ref{crypto:prf}) to compute a pseudo-random function
$F_{SK}(H_1(m))$ for all allegations in the set. The resulting PRF is
revealed in the clear to all \tescrows, but $SK$ and $H_1(m)$
aren't. $SK$ is a shared secret specially generated for each set of
allegations being compared. The sets are determined by the
thresholding protocol described below.

Since the PRF is bijective when the range of $H_1(\cdot) \in \ZZ_q$,
$H_1(m)$ and $H_1(m')$ are equal if and only if $F_{SK}(H_1(m)) =
F_{SK}(H_1(m'))$. Hence, each \tescrow can locally determine which
allegations match which others (note, $H_1$ is collision-resistant). Further, $F_{SK}(\cdot)$ is a PRF
whose secret-key is not used for any other purpose, so no additional
information about $m$ is revealed. Thus all pairs that match in a set
of $n$ allegations can be computed efficiently in linear-time (constant time per allegation).

\subsubsection{Bucketing Protocol for Reveal Thresholds $> 2$}
\label{proto:matching:thresholding}

\begin{algorithm}[h]
  \caption{
  Rules for the secure thresholding algorithm \textproc{Bucketing}, whose interface is described in \S\ref{s:security-definition}. It reveals a set of allegations if and
    only if all of their thresholds are satisfied by that set.}
  \label{proto:algo:thresh}

  Apply the following rules repeatedly (in any order) till no further rules
  apply. Rules 2,3 and 4 only apply to collections that haven't been revealed.
  \begin{enumerate}
  \item When an allegation with threshold $t$ is filed, it forms a singleton
    collection and is added to bucket $t-1$ (since $t-1$ other allegations must
    match the allegation before it is revealed).
  \item If $Min(A)$ is the smallest bucket occupied by a collection $A$ and every
    allegation in $A$ has a threshold $< Min(A) + |A|$, $A$ is copied to bucket
    $Min(A)-1$. Note that $A$ still occupies the buckets it used to occupy.
    Copying merely adds the collection to a new bucket.

  \item When two collections overlap and occupy the same bucket, and their
  allegations are found to match~(\S\ref{proto:matching:matching}), they coalesce into one
  collection.

  \item When a collection reaches bucket $0$, all of its allegations are revealed
    as described in \S\ref{proto:reg-and-auth}.
  \item If a collection $A$ is revealed, we make sure it occupies buckets $1,
    \ldots, |A|$, even as $A$ grows. This enables future matching allegations to
    be revealed.
\end{enumerate}
\end{algorithm}

The above matching protocol is secure when the reveal thresholds equal 2.
Supporting higher thresholds securely {\it and} scalably requires more work.
A collection $A$ of matching allegations should be revealed when every
allegation in $A$ has a reveal threshold no more than the size of $A$
(written $|A|$). One way to find such collections would be to run the
above matching protocol on the set of \emph{all} allegations
irrespective of their thresholds and then locally determine whether an appropriate set $A$ exists.  However, this design is susceptible
to a probing attack where an adversary interested in probing for the
{\it existence} of a specific allegation, files the same allegation with a
very high threshold. By corrupting just one of the \tescrows, the
adversary could then compare this allegation to all other allegations
in the system, without any risk that its own false allegation would
ever be revealed (since the probe allegation has a very high threshold). To
deter such attacks, we control which allegations
\emph{can} be compared to each other. We ensure that {\em if two
allegations can ever be compared by a minority of escrows, then they
will be revealed at the same time, if at all}. That is, two allegations
can be compared by a minority only if they are waiting for the same
number of matching allegations. Now, if the adversary tries to probe
with a fake allegation, the fake allegation (and hence the adversary's
real-world identity) is exactly as likely to be revealed as the
\victim's actual matching allegation.

Thus, to just learn the {\em number} of allegations against a person, the adversary must risk leaving a non-repudiable paper trail. Additionally, if the adversary is a guilty party seeking to determine the number of escrowed allegations against them, they risk precipitating the revelation of an honest allegation, which may have otherwise remained escrowed forever. We assume the adversary won't take such risks.

Note, the above attack only works for threshold $> 2$. If an honest alleger's threshold is 2, \sate doesn't admit any attacks not present in a single trusted-party implementation, even if the adversary is willing to risk filing a probe allegation. Prior work on single trusted-party based allegation escrows~\cite{project-callisto} only supports thresholds of 2, and still demonstrates social utility in allegations of sexual misconduct.

It is possible to use generic MPC to avoid such probing attacks. However, the time taken to process one allegation would then increase with the number of allegations already present in the system. By simply filing many `junk' allegations, an adversary can slow down the system till it is no longer useful. Hence scalability is crucial; our matching/bucketing protocol does only a constant amount of work per allegation.

\begin{figure*}\small
  \begin{mdframed}
  \begin{multicols}{2}
    {\bf Escrow Initialization:} Every  $j^{th}$ escrow executes:
    \begin{enumerate}[1.]
    \setlength{\itemindent}{0em}
        \item {\bf } $\shares{SK_R}_j \leftarrow$ \textproc{RandomCoinToss}()
        \item $\shares{SK_I}_j \leftarrow$ \textproc{RandomCoinToss}()
        \item $PK_I \leftarrow$ \textproc{PublicExponentiate}$(g, \shares{SK_I}_j)$
        \item $\shares{SK_i}_j \leftarrow$ \textproc{RandomCoinToss}() $\forall i \in \{0, 1, 2, \ldots\}$
        \\ {\tt \#$SK_i$ is generated lazily when required}
        \item {\small Set} {\tt identities}$\leftarrow$\{\}; {\tt allegations}$\leftarrow$\{\}; {\tt buckets}$\leftarrow$\{\}
    \end{enumerate}

    {\bf Client Initialization and Registration:}
    The client uses {\tt ID}, a certificate previously obtained from a CA, to authenticate to the escrows.  The \tescrow's identities are managed with PKI. The secure authenticated channel is idealized using $\FF_\mathsf{smt}$. Client signs all messages with {\tt ID}.

    The client executes:
    \begin{enumerate}[1.]
    \setlength{\itemindent}{0em}
        \item$(pk, sk) \leftarrow \mathrm{Gen}(1^\lambda)$ {\it \# Signing key-pair}
        \item Broadcast (``Register'', {\tt ID}) idealized using $\FF_\mathsf{B}$
        \item \textproc{VSS}$(H_2(pk))$ among the escrows

        \hspace{-5mm}            Every  $j^{th}$ escrow executes:
        \item  $\bot \leftarrow$ \textproc{DVRF}$(\shares{SK_I}_j, \shares{H_2(pk)}_j,$ {\it True, client})
        \label{fig:sate-real-dvrf-client}
        \item $R \leftarrow$ \textproc{DVRF}$(\shares{SK_R}_j, \shares{H_2(pk)}_j$, {\it False, all-escrows})
        \item {\tt identities}$[R]$ $\leftarrow \ID$

        \hspace{-5mm}            The client executes:
        \item Receive $\pi_{SK_I}(H_2(pk))$ from escrows' \textproc{DVRF} call in step~\ref{fig:sate-real-dvrf-client}
        \item Store $(pk, sk)$ and $\pi_{SK_I}(H_2(pk))$ for future use.\\
    \end{enumerate}

    {\bf Allegation Filing:}
    With allegation text $a$, $m = H_1$(meta-data), reveal threshold $t$, fresh symmetric encryption key $k$ and, $(pk, sk, \pi_{SK_I}(H_2(pk)))$ produced during registration,
   the client connects to each \tescrow over an anonymous communication channel, idealized using functionality $\FF_\mathsf{Anon}$. It signs all communication with $sk$, which \tescrows verify before accepting the input. The \tescrows process allegations serially. They identify an allegation by the $pk$ used to file it. The client executes:     \begin{enumerate}[1.]
    \setlength{\itemindent}{0em}
        \item Broadcast (``File'', $pk$, $t$), idealized using $\FF_\mathsf{B}$
        \item Broadcast $(\pi_{SK_I}(H_2(pk)), Enc_k(a))$, idealized
        using $\FF_\mathsf{B}$
        \item \textproc{VSS}$(m)$, \textproc{VSS}$(k)$ among the escrows\\

        \hspace{-5mm}       Every  $j^{th}$ escrow executes:

               \hspace{8mm}{\it \# Identity Verification and Filing}
\item If  $\textproc{VerifyVRF}(PK_I, \pi_{SK_I}(H_2(pk)), H_2(pk))$ fails, abort
\item If {\tt allegations}$[pk]$ exists, abort
\item{\tt allegations}$[pk]$ $\leftarrow (t, pk, Enc_k(a), \shares{m}_j, \shares{k}_j)$
\item$M \leftarrow $ \textproc{DVRF}($\shares{SK_{t-1}}_j, \shares{m}_j,$ {\it False, all-escrows})
\item{\tt buckets[$t-1$]} $\leftarrow$ {\tt buckets[$t-1$]}$\cup$ $\{(pk, M)\}$

        \hspace{8mm}{\it \# Matching and Bucketing}
\item$T \leftarrow$ \textproc{Bucketing}({\tt buckets})
\item{\bf While $T \neq \bot$}
\item\quad {\tt \# Move allegation $T.id$ to bucket $T.i$}
\item\quad $\shares{m}_j \leftarrow$ {\tt allegations[$T.id$]}$.\shares{m}_j$
\item\quad $M' \leftarrow$ \textproc{DVRF}($\shares{SK_i}_j, \shares{m}_j,$ {\it False, all-escrows})
\item\quad {\tt buckets[$T.i$]} $\leftarrow$ {\tt buckets[$T.i$]} $\cup$ \{($T.id$, $M'$)\}
\item\quad $T \leftarrow$ \textproc{Bucketing}({\tt buckets})

        \hspace{8mm}{\it \# Reveal Allegations}
\item{\bf For} $(id, M)$ {\bf in} {\tt buckets}$[0]$
\item\quad {\tt buckets}$[0] \leftarrow\,$ {\tt buckets}$[0]$ \textbackslash $(id, M)$
\item\quad $(t, pk, Enc_k(a), \shares{m}_j, \shares{k}_j) \leftarrow$ {\tt allegations}$[id]$
\item\quad $R \leftarrow$ \textproc{DVRF}($\shares{SK_R}_j, H_2(pk),$ {\it false, all-escrows})
\item\quad $\ID \leftarrow$ {\tt identities}$[R]$
\item\quad $k \leftarrow \textproc{CombineShares}(\shares{k}_j)$
\item\quad $a \leftarrow Dec_k(Enc_k(a))$
\item\quad Output the revealed allegation $(t, a, \ID)$
    \end{enumerate}
  \end{multicols}
  \end{mdframed}
  \caption{The \sate protocol. The client has a certificate of identity, $ID$, from a certificate authority. MPC, cryptographic and communication primitives used in this protocol are defined in~~\S\ref{crypto:notation}. For ease of exposition, we only show the registration of one key. In practice, the client can register $l$ keys, and randomly pick one to use during filing.
  We formalize the functionalities $\FF_\mathsf{B}$ and $\FF_\mathsf{Anon}$ in Section~\ref{s:security-definition}. \textproc{Bucketing} is defined in~~\S\ref{proto:matching:thresholding}. }
  \label{fig:sate-real}
\end{figure*}

To keep track of how many matches each allegation needs, each \tescrow
independently maintains buckets numbered $0$, $1$, $2$, $3$,
\ldots. An allegation is in the $i^{th}$ bucket only if it is waiting for $\le i$ more allegations. An allegation may be present
in more than one bucket. Bucket $0$ contains a list of allegations
that have been revealed. Algorithm~\ref{proto:algo:thresh}
controls which allegation occupies which buckets.

Only allegations \emph{within} a bucket may be compared to each other. To ensure this, each
bucket $i$ is associated with an independently chosen secret key $SK_{i}$, which is shared
among the \tescrows ($SK_{i}$ is generated lazily when bucket $i$ is
first used). When an allegation is added to bucket $i$, the \tescrows
compute $F_{SK_i}(H_1(m))$ for that allegation using \textproc{DVRF}~(see~\ref{crypto:prf}). Any \tescrow can use this to locally compare any two allegations in bucket $i$. Since, by design, $SK_i \neq SK_j$ if
$i \neq j$, $H(m)$ and $H(m')$ cannot be compared using
$F_{SK_i}(H(m))$ and $F_{SK_j}(H(m'))$ when $i \not= j$. Allegations
that are known to match each other, either directly because they are
in the same bucket or indirectly by transitivity, are said to belong
to the same `collection'. When allegations from two different
collections are found to match, the collections coalesce into one.
The resulting collection spans the union of buckets spanned by the
parent collections and contains the union of allegations. Every
allegation belongs to exactly one collection at any given time. To
copy all allegations in a collection into a new bucket, the PRF for
only one allegation's meta-data needs to be computed, since all
allegations in a collection have identical meta-data.

This algorithm trivially satisfies the property that, once two allegations are
known to be equal to each other, they belong to the same collection and are
revealed together (if at all). This deters the probing attacks described above
that motivated this elaborate mechanism. We also prove that the thresholding
algorithm is `correct':

\begin{theorem}[Correctness] Algorithm~\ref{proto:algo:thresh}  reveals a
collection if and only if the thresholds of all allegations in it are satisfied.
\label{thm:thresh-correct}
\end{theorem}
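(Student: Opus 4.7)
Plan. The statement is a biconditional; I would prove each direction at a fixed point of the rule system and then verify that soundness also survives the growth that Rule 5 permits after a collection has been revealed.

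Soundness (revealed $\Rightarrow$ thresholds satisfied). By Rule 4, revelation fires exactly when a collection $C$ enters bucket $0$. There are only two ways for this to happen. Rule 1 may file a singleton with threshold $t=1$ directly into bucket $0$, in which case $|C|=1=t$. Otherwise, Rule 2 copies $C$ from bucket $1$ into bucket $0$, and the firing premise of Rule 2 at $Min(C)=1$ demands $t_a<1+|C|$, i.e., $t_a\le |C|$, for every $a\in C$. Either way, all thresholds are satisfied at the moment of revelation.

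Completeness (thresholds satisfied $\Rightarrow$ revealed). I argue at a fixed point of rule application. Let $C$ be a collection with $t_a\le |C|$ for every $a\in C$, and suppose toward contradiction that $C$ is not revealed. Then $Min(C)\ge 1$, so for every $a\in C$ we have $t_a\le |C|<1+|C|\le Min(C)+|C|$, which is precisely the premise of Rule 2. Hence Rule 2 can still fire on $C$, contradicting the fixed-point assumption. Therefore $Min(C)=0$ and Rule 4 has already revealed $C$.

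Handling growth after revelation. Rule 5 keeps a revealed collection in buckets $1,\ldots,|A|$ and admits further matching allegations, so I also need that newly absorbed allegations do not break $t_a\le |C|$. I would introduce the invariant that for every collection $C$ in any bucket $b$ it occupies, $t_a\le |C|+b$ holds for every $a\in C$, and prove it by induction on rule applications. Rules 1 and 2 preserve it directly from their firing conditions, and Rule 4 is trivial. The main obstacle is Rule 3 (coalescence): when $A_1$ and $A_2$ merge in a shared bucket, the combined $A$ occupies the union of their bucket ranges, and the inductive hypothesis for $A_i$ alone does not obviously bound $t_a$ at buckets contributed only by $A_{3-i}$. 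I would close this case with an auxiliary width lemma, $Max(C)-Min(C)\le |C|-1$ for every unrevealed $C$ (proved by the same induction), which together with the shared-bucket requirement $Min(A_1)\le Max(A_2)$ yields $Min(A_1)-Min(A_2)\le |A_2|-1$, exactly the slack needed. Given the invariant, soundness at $b=0$ reads $t_a\le |C|$ whether the collection was just revealed or has since grown, so the biconditional persists throughout the execution.
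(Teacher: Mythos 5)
Your soundness direction, together with the invariant $t_a \le |C| + b$ for every occupied bucket $b$ and the width lemma $Max(C)-Min(C)\le |C|-1$, essentially reconstructs the paper's three structural properties (contiguity, span bounded by size, and threshold $\le |A|+Min(A)$), and your handling of the coalescence case via the shared-bucket inequality $Min(A_1)-Min(A_2)\le |A_2|-1$ is the same slack computation the paper performs for its third property. That half of the argument is sound.

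The gap is in the completeness direction. You fix a single \emph{collection} $C$ with $t_a\le |C|$ for all $a\in C$ and show Rule 2 must still fire at a fixed point; but the claim the system actually needs --- and the one the paper proves, closing with ``a set of $n$ matching allegations are revealed if and only if all their thresholds are $\le n$'' --- quantifies over \emph{sets of matching allegations}, which at saturation may be partitioned into several collections sitting in disjoint buckets. Your argument says nothing about why such a partition cannot persist: if $n$ matching allegations with thresholds $\le n$ are split into collections $C_1,\dots,C_k$ each containing some allegation with threshold exceeding $|C_i|$, then Rule 2 is blocked on every piece, your contradiction never gets started, and nothing is revealed even though it should be. Under your reading the theorem is close to vacuous, since it would not exclude the algorithm simply never forming the relevant collection. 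Closing the gap requires the paper's pigeonhole step: unrevealed matching allegations all lie in buckets $1,\dots,n-1$; each collection occupies \emph{at least} $|C_i|$ buckets (this is the lower bound $Span(A)\ge|A|$, obtained by showing Rule 2 keeps firing until the span grows to the size --- the direction of the span property that your width lemma does \emph{not} supply); distinct collections of mutually matching allegations cannot share a bucket at saturation, or Rule 3 would apply; hence $n$ allegations would need at least $n$ of the $n-1$ available buckets, forcing repeated coalescence until a single collection reaches bucket $0$. Without this counting argument the ``if'' direction is established only for allegations that already happen to form one collection, which is precisely what cannot be assumed.
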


\begin{proof}
Let $Max(A)$ and $Min(A)$ be the largest and smallest buckets occupied
by collection $A$.  We begin by proving that the following three
properties hold whenever all five rules of
Algorithm~\ref{proto:algo:thresh} have been applied to saturation
(meaning no further rule applies). (1) every collection spans a
contiguous range of buckets, (2) every collection $A$ spans $|A|$
buckets, i.e. $|A| = Max(A) - Min(A) \myoverset{def}{=} Span(A)$, (3)
every allegation in a collection $A$ has a threshold $\le |A| +
Min(A)$ and hence can be revealed if $Min(A)$ more matches are
available.

The first property can be proved as an invariant that is trivially maintained by
rules 2, 4 and 5 with rule 1 as the base case. Now, two collections coalesce
only if they share a bucket (and hence their allegations may be compared). Since
the union of contiguous, overlapping segments is contiguous, rule 3 also
maintains the invariant.

To prove the second property, note that in any collection $A$, all allegations
have a threshold $\le Max(A)$ by definition. If $Span(A) < |A|$, $Max(A) =
(Max(A) - Min(A)) + Min(A) < |A| + Min(A)$, since $Max(A) - Min(A)
\myoverset{def}{=} Span(A)$. Hence rule 2 can be applied repeatedly until
$Span(A)$ increases to equal $|A|$. Hence $Span(A) \ge |A|$. We now prove that
$Span(A) \le |A|$ is an invariant with rule 1 as the base case.  Rules 4 and 5 trivially maintain the invariant. Rule 2 would not apply if it causes
the invariant to be broken, as there is at-least one allegation with threshold
$Max(A)$ if A is not yet revealed (which is when rule 2 applies). The threshold
condition for this allegation will not be met if $Span(A) > |A|$, as it
implies the threshold $t = Max(A) \myoverset{def}{=} Min(A) + Span(A) > Min(A)
+ |A|$. Applying Rule 3 to create $C$ out of $A$ and $B$,
maintains the invariant. $|C|$ spans a union of the parent's buckets, hence
$Span(C) \le Span(A) + Span(B) \le |A| + |B| = |C|$ because $A$ and $B$ are
disjoint. Hence the invariant is maintained.

The third property is explicitly maintained as an invariant by rule 2
and is trivially satisfied by rules~1 and~5. Rule~3 is applicable in
two ways. First, when a new allegation arrives in between an older
collection, the property is not broken. Second, if two
existing collections, $A$ and $B$, coalesce into $C$ by rule 3, one is
`above' another. Let $Min(B) = Max(A)$, without loss of
generality. Then, $Min(C) = Min(A)$, hence allegations in $A$ satisfy
the property. The drop in $Min$ for allegations in $B$ is $Min(B) -
Min(C) = Max(A) - Min(A) = Span(A) \le |A|$ (we proved above that all rules maintain $Span(A) \le |A|$ as an invariant). This drop in $Min$ is compensated by a
corresponding increase in size of the collection by $|A|$.

We now use these properties to prove correctness. The third property
implies that when a collection $A$ is revealed, the
threshold condition is satisfied for all revealed allegations, since $Min(A)=0$.  To
prove the other direction, let there be $n$ matching allegations such that all
their thresholds are $\le n$. Assume for contradiction that they are not revealed. This
means that they all belong to buckets $1, \ldots, n-1$. By the pigeonhole
principle, there will be one bucket with multiple allegations which
will start coalescing with rules 2 and 3. If the process stops with a collection
of size $k < n$, $n-1-k$ buckets will be left with $n-k$
allegations, because property 2 ensures the size of a collection equals its span. Again, by the pigeonhole principle, the coalescing
process starts. This continues till there is only one collection with
$n$ allegations that spans buckets $0 \ldots n-1$ and all $n$
allegations get revealed. Hence, a set of $n$ matching allegations are
revealed if and only if all their thresholds are $\le n$.
\end{proof}

{\bf \textproc{Bucketing}({\tt buckets}): Interface to the Algorithm} The real protocol~(Figure~~\ref{fig:sate-real}) and ideal protocol~(Figure~~\ref{fig:ideal-func}) interface the bucketing algorithm with the {\textproc{Bucketing}} function. It takes as input the set of buckets, {\tt buckets}. Each bucket $i$, (denoted as {\tt buckets[$i$]}) is a set of tuples $(id, M)$ describing the allegations in that bucket. $id$ is a unique allegation identifier\footnote{In the real protocol, $pk$ can be the allegation identifier, since each key is used only once. For ease of exposition, each user can register for and file only one (not $l$) allegation in our ideal protocol. Hence we use \ID as $id$. The full protocol can be modeled by each user having $l$ distinct \ID s.}, and $M$ is a representation of the meta-data which can be compared for equality to determine which allegation matches which others {\it within a bucket}. The function \textproc{Bucketing} returns a task $T$ if more rules apply in Algorithm~\ref{proto:algo:thresh}. Else it returns $\bot$. Task $T$ instructs the caller to move an allegation $T.ID$ to a bucket $T.i$. For ease of exposition, when a collection is added to a new bucket, \textproc{Bucketing} produces one task per each allegation. In practice, only one is necessary, since all meta-data are identical (see below). Note, \textproc{Bucketing} doesn't need to implement step 1 of Algorithm~\ref{proto:algo:thresh}, since it is implemented by $\Fsate$. It doesn't maintain its own state.

\subsection{Computation Cost}
\label{s:proto:prf-cost}

The computationally expensive steps in the \sate protocol are the VRF/PRF
computations that require interaction between the \tescrows, since they involve
multiplication and exponentiation over shared secrets. However, the
number of such computations scales very well and doesn't increase with the number
of users as well as the number of allegations filed.

Registering a new user requires two computations for every public
key $pk$ that the user provides, one to compute the MAC on the key, and one
to compute $F_{SK_R}(pk)$. Filing an allegation does not, of itself, require any PRF computation.
The cost to move a collection of allegations to a new bucket can be reduced by observing that, we only need to compute PRF for one of the allegations, since
they all have identical meta-data. We can prove that,
in an amortized sense, we need to compute the {\tt PRF} at most twice
for each allegation, independent of its threshold. Revealing
an allegation requires one more PRF computation (to discover the
identity of the \victim). To summarize, on average, every filed
allegation requires 2 PRF computations if it is never revealed, and 3
PRF computations if it is revealed.

\paragraph{Compute Cost Analysis and Comparison with WhoToo} Since there is no implementation for WhoToo, to compare performance, we count the number of cryptographic operations required for \sate and WhoToo, shown in figure~\ref{fig:eval:num-ops}. To improve scalability, WhoToo offloads some work to an offline precomputation step. But non-trivial `online' computation remains. As discussed above, scalability is essential for a practical allegation escrow.

\begin{figure}
    \centering
    \small
    \begin{tabular}{|l|c|c|c|}
    \hline
    & {\bf \shortstack{SAE \\ \textproc{DVRF}}} & { \shortstack{WhoToo\\(online)}} & { \shortstack{WhoToo\\(precompute)}} \\
    \hline
    \multicolumn{4}{|l|}{\bf Interactive MPC Operations} \\
    \hline

    \textproc{Multiply} & 1 & 1 & $N$ \\
    \textproc{PublicExponentiate} & 1 & $3N + q + 1$ & $q$ \\
    \textproc{RandomCoinToss} & 1 & $q$ & $q + 2$ \\
    \textproc{CombineShares} & 1 & 1 & 1 \\

    \hline
    \multicolumn{4}{|l|}{\bf Local MPC Operations} \\
    \hline

    \textproc{Add} & 1 & 0 & 0 \\
    \textproc{Multiply Constant} & 1 & $N$ & 1 \\
    \hline
    \end{tabular}
    \caption{\normalsize Number of various MPC operations required per allegation per \tescrow for \textproc{DVRF} computation in \sate, compared with WhoToo~\cite{popets-work}. Here $N$ is the number of allegations filed so far, and $q$ is WhoToo's global reveal threshold. Here we only show the cost for WhoToo's matching protocol; secure identity verification is separate. \sate requires at-most five \textproc{DVRF} computations per allegation (amortized), including for identity verification. Note that, WhoToo's {\it per-allegation} complexity is $O(N)$, and hence $O(N^2)$ complexity for $N$ allegations.}
    \label{fig:eval:num-ops}
\end{figure}

\subsection{Why is Efficiency Important?}
\label{s:sys-security}
\subsubsection{Resist DoS attacks}
\label{s:dos}
Like any public service, \sate can be flooded with client requests to mount a denial-of-service (DoS) attack. MPC systems can be particularly susceptible to such attacks, since the client can trigger expensive computations at little cost to itself. \sate has two properties that prevent the adversary from exploiting this asymmetric computation cost. 1) {\bf Bounded MPC:} Each registered real identity can trigger only a bounded number of MPC operations. And 2) {\bf Scalability:} The amount of work required to process a user request doesn't increase with the number of allegations or number of registered users.

{\bf Bounded MPC} holds in \sate, since MPC is only triggered after authentication. Since authentication is a local computation, failed authentications do not cause MPC operations. Further, each user (with a real identity) is only allowed to register a fixed number $l$ of public keys in any issue period and each key can only be used to file one allegation. So, a registered user can cause at most $(2 + 3)l = 5l$ PRF computations in any issue period. For $l = 10$ and an issue period of one year, this amounts to at most 50 PRF computations per year per \emph{real} user, which is an extremely low rate for an effective DoS attack.

WhoToo~\cite{popets-work}, a concurrent work, does not have the {\bf scalability} property. Filing the $N^{th}$ allegation takes $O(N)$ MPC operations, so the total cost for filing $N$ allegations is $O(N^2)$. Asymptotically, this is the same as black-box MPC. Further, it allows users to file an arbitrary number of allegations. There is no obvious way to prevent this with their authentication protocol without breaking the ``{\bf bounded MPC}'' property. Hence an adversary could file a million `garbage allegations' (i.e. against random strings), and prevent proper functioning.

\subsubsection{Larger Allegation Pool}
To maximize the probability that a matching allegation will be found, we must allow a large set of sets to use the same escrow. This allows large pools of allegations to be matched with each other. Scalability is essential to enable this.

\subsubsection{Avoid Timing Side-Channels} If somebody commits a crime and learns (through a compromized \tescrow) that an allegation was filed two days later, when filings are otherwise rare, they may reasonably conclude that their victim filed the allegation. We excluded such side-channels in the threat model. To realize this, honest \tescrows (and other external well-wishers) can regularly file decoy allegations. Since \sate maintains anonymity and privacy, the adversary cannot distinguish decoys from real allegations. Those filing decoy allegations must register their separate (real) identities for doing so, and enter a contractual obligation to ensure no decoy ever gets revealed. This can be ensured by filing allegations against random strings. Decoy allegations won't slow down the system since \sate is scalable.

\section{Security Analysis}
\label{s:security-definition}
To model security and privacy we use the UC framework,
which allows \sate to compose with other cryptographic schemes while maintaining security.

\paragraph{Attacker Model}
Agents (\victims and \tescrows) in our system are interactive Turing machines that communicate with an ideal functionality $\Fsate$.
The adversary $\A$ is a PPT machine with access to an interface $\mathsf{corrupt}(\cdot)$. It takes an agent identifier and returns the internal state of the agent to the adversary. All subsequent incoming and outgoing communication of the agent is then routed through $\A$. The adversary is $f-bounded$, and can corrupt a minority $f < n/2$ of \tescrows and any number of \victims. For formal security, we consider the static corruption model; i.e., the adversary commits to the identifiers of the agents it wishes to corrupt ahead of time.\footnote{The static adversary is a standard assumption employed by most practically relevant MPC systems today.~\cite{SPDZ}}

\paragraph{Communication Model}
We assume the network to be bounded-synchronous~\cite{pederson-mpc} such that the protocol execution occurs in discrete rounds. The agents are aware of the current round, and if a message is created at round $i$, it is delivered at the beginning of round $(i+1)$. Our model assumes that computation is instantaneous. In practice, this is justified by setting a maximum publicly  known time bound on message transmission. If no message is delivered by beginning of the next round, then the message is set to be $\bot$. For an example of the corresponding ideal functionality $\FF_\mathsf{syn}$ we refer the reader to~\cite{CanettiUC, KatzMTZ13}.
The attacker is informed whenever some communication happens between two agents and the attacker can arbitrarily delay the delivery of the message between honest parties within the round boundaries.

The real-world protocol assumes the existence of a functionality $\FF_{anon}$ (see~\cite{formal-onion} for an example), which provides user with an anonymous communication channel.
Moreover, the protocol also assumes the existence of a broadcast channel for allegers to reliably
communicate with all \tescrows and we model this as a bulletin board visible to all \tescrows
(such as~\cite{bulletinboard}) with an ideal functionality
$\FF_\mathsf{B}$.

Concretely, our real-world protocol uses
$\FF_\mathsf{anon}$, $\FF_\mathsf{B}$ and $\FF_\textsf{syn}$ as subroutines,
and is specified in the $(\FF_\mathsf{B},
\FF_\mathsf{anon},\FF_\textsf{syn})$-hybrid model.

\paragraph{Universal Composability}
Let $\mathsf{EXEC}_{\rho, \adv, \mathcal{E}}$ be the ensemble of the outputs of the environment
  $\mathcal{E}$ when interacting with the $f$-bounded adversary $\adv$ and parties
  running the protocol $\rho$ (over the random coins of all the
  involved machines).

\begin{definition}[UC-Security]
A protocol $\rho$ UC-realizes an ideal functionality $\FF$ if
for any adversary $\adv$ there exists a simulator $\mathcal{S}$ such that for
any environment $\mathcal{E}$ the ensembles $\mathsf{EXEC}_{\rho, \adv,
  \mathcal{E}}$ and $\mathsf{EXEC}_{\FF, \mathcal{S}, \mathcal{E}}$ are computationally
indistinguishable.
\label{def:UCsec}
\end{definition}
\begin{figure}[t]
  \begin{mdframed}
    {\bf Initialization}
    \begin{enumerate}[1.]
        \item {\tt registered} $\leftarrow \{\}$, {\tt allegations} $\leftarrow \{\}$, \\
        {\tt buckets} $\leftarrow \{\}$, {\tt unique} $\leftarrow \{\}$
    \end{enumerate}

    {\bf Registration}
    Invoked by client with identity \ID
    \begin{enumerate}[1.]
        \item Send (``Register'', \ID) to all escrows
        \label{fig:ideal-func:reg-reveal}
        \item If received $\bot$ from escrow $i$, then \textsf{IdentifiableAbort}$(i)$
        \item {\tt registered} $\leftarrow$ {\tt registered} $\cup$ \{\ID\}
    \end{enumerate}

    {\bf Allegation Filing}
    Invoked by client with identity \ID, allegation $a$, reveal-threshold $t$, and metadata $m$
    \begin{enumerate}[1.]
        \item If \ID $\notin$ {\tt registered}, then \textsf{Abort}
        \item {\tt registered} $\leftarrow$ {\tt registered} \textbackslash \ID
        \item Send (``File'', \textproc{Unique}(\ID), $t$) to all escrows.
        \label{fig:ideal-func:thresh}
        \item If received $\bot$ from escrow $i$, then \textsf{IdentifiableAbort}$(i)$
        \item {\tt allegations$[\ID]$} $\leftarrow (t, m, a)$
        \item {\tt buckets[$t-1$]} $\leftarrow$ {\tt buckets[$t-1$]} $\cup$ \{($\ID, m$)\}
        \item Send ($t-1$, \textproc{Unique}$(\ID)$, \textproc{Unique}$((t-1, m))$) to all escrows

        \hspace{13mm}{\it \# Matching and Bucketing}
        \item $T \leftarrow$ \textproc{Bucketing}({\tt buckets})
        \item {\bf While} $T \neq \bot$
        \item \quad {\it \# Move allegation $T.\ID$ to bucket $T.i$}
        \item \quad If received $\bot$ from escrow $i$, then \textsf{IdentifiableAbort}$(i)$
        \item \quad $m' \leftarrow$ {\tt allegations}[$T.\ID$].$m$
        \item \quad {\tt buckets$[T.i]$} $\leftarrow$ {\tt buckets$[T.i]$} $\cup$ \{($T.\ID$, $m'$)\}
        \item \quad Send ($T.i$, \textproc{Unique}$(T.\ID)$, \textproc{Unique}(($T.i$, $m'$)))

        \hspace{5mm} to all escrows

        \label{fig:ideal-func:thresh-reveal}
        \item \quad {\tt Task} $\leftarrow$ \textproc{Bucketing}({\tt buckets})

        \hspace{12mm}{\it \# Reveal Allegations}

        \item{\bf For each} $(\ID, M)$ {\bf in} {\tt buckets}$[0]$
        \item\quad {\tt buckets}$[0] \leftarrow\,$ {\tt buckets}$[0]$ \textbackslash $(\ID, M)$
        \item \quad If received $\bot$ from escrow $i$, then \textsf{IdentifiableAbort}$(i)$
        \item \quad $(t, m', a) \leftarrow$ {\tt allegations}$[\ID]$
        \item \quad Send $(t, a, \ID)$ to all \tescrows
        \label{fig:ideal-func:final-reveal}
    \end{enumerate}

    $~$\\

    {\bf function} \textproc{Unique}($x$) {\bf begin}
    \begin{enumerate}[1.]
    \item {\it \# Map input objects to unique numbers}
    \item {\bf if} $x$ {\bf is in} {\tt unique}
    \item \quad {\bf return} {\tt unique}$[x]$
    \item {\bf end if}
    \item {\tt unique}$[x] \leftarrow$ $|${\tt unique}$|$
    \item {\bf return} {\tt unique}$[x]$
    \end{enumerate}

    {\bf end function}

  \end{mdframed}
  \caption{The ideal functionality for \sate, \Fsate. \textproc{Bucketing} is a {\em local} algorithm~(\S\ref{proto:matching:thresholding}) that determines which matches are safe to reveal to the \tescrows.}
  \label{fig:ideal-func}
\end{figure}

\paragraph{Ideal Functionality}
Figure~\ref{fig:ideal-func} describes an ideal functionality $\Fsate$,
which models the intended behavior a \sate, in terms of functionality
and security properties.

For a more modular treatment, our UC definition models only focus on the cryptographic aspects and we assume that all \victims have certificates of real identity from a trusted offline authority. Further, we omit the handling of session IDs (SIDs) in $\Fsate$ to reduce clutter. Messages are assumed to be implicitly associated with SIDs.

In the ideal functionality, {\tt registered} is the set of registered users' identities---only registered users may file an allegation. The set of allegations filed so far is denoted by {\tt allegations}. Each allegation is a tuple of the reveal threshold, meta-data, allegation text and real identity (which is known to the ideal functionality). {\tt unique} counts the number of allegations filed so far, which allows us to assign a unique identifier to each allegation. If an \tescrow (say, the $i^{th}$) refuses to cooperate, \Fsate aborts, reporting the $i^{th}$ \tescrow to the other \tescrows. This is denoted as \textsf{IdentifiableAbort}$(i)$. We assume the adversary is malicious-but-cautious, and wouldn't want to get reported. If authentication fails, then \Fsate calls \textsf{Abort} without a parameter, since it wasn't any \tescrow's fault. In the real protocol, clients can file $l$ allegations for each registration. For notational simplicity, we only show $l=1$ here; but the $l>1$ case is straightforward. Textual description in Section~\S\ref{s:construction} describes the real protocol with $l \ge 1$.

\paragraph{Bucketing Algorithm} To scalably and efficiently implement reveal thresholds, we propose a bucketing protocol~(see~\S\ref{proto:matching:thresholding}) that divides allegations into buckets. All \tescrows know which allegations {\it within} a bucket match each other. This makes the ideal functionality admit a somewhat surprising attack: If an \emph{adversary} files an allegation, it learns whether other matching allegations exist in the same bucket in which the adversary's allegation is placed. These attacks are consistent with our threat model, which allows for probing attacks by adversaries. As explained in \S\ref{proto:matching:thresholding}, the buckets in which an allegation is placed are carefully chosen to disincentivize these attacks by relying on our accountability property~(see~\ref{s:design-requirements}). Note, \textproc{Bucketing} is a local and non-cryptographic algorithm. It merely determines what information can be revealed to the adversary, and hence can be called from the ideal functionality.

\paragraph{Discussion}
$\Fsate$ satisfies the allegation secrecy, \victim anonymity and accountability properties described in~~\S\ref{design-space}, relative to our threat model. {\em Accountability} is ensured since, if a user files an allegation, \Fsate reveals their real identity (\ID) as soon as their threshold is met. Note that we already proved the bucketing protocol correct. {\em Allegation secrecy and \victim anonymity} are ensured because \Fsate reveals information about an allegation only in the following scenarios: (1) \Fsate reveals a user's identity then they register into the system (step~\ref{fig:ideal-func:reg-reveal}).  This is harmless since users register irrespective of whether or not they currently intend to file an allegation. (2) As the bucketing protocol progresses, \Fsate reveals which allegations match which others (step~\ref{fig:ideal-func:thresh-reveal}): we discussed why this information doesn't violate our properties above. (3) It reveals the threshold of an allegation when it is filed (step~\ref{fig:ideal-func:thresh}), hiding which isn't part of our threat model. (4) Finally it reveals the entire allegation when its threshold is met and is ready to be revealed (step~\ref{fig:ideal-func:final-reveal}).
in~~\S\ref{proto:matching:thresholding}, Theorem~\ref{thm:thresh-correct}.

Figure~\ref{fig:sate-real} presents the pseudocode for our cryptographic protocol. We prove UC-security in the $(\FF_\mathsf{B}, \FF_\mathsf{anon}, \FF_\mathsf{syn})$-hybrid model.  Theorem~\ref{thm:uc-security} holds
for any UC-secure realization (as defined in Definition~\ref{def:UCsec}) of $\FF_\mathsf{B}$, $\FF_\mathsf{anon}$ and $\FF_\mathsf{syn}$. We provide a proof sketch of Theorem~\ref{thm:uc-security} in Appendix \ref{sec:uc-security-proof}.

\begin{theorem}[UC-Security]
  \label{thm:uc-security}
Let $\textproc{VSS}$ be a secure verifiable secret sharing scheme, $\textproc{RandomCoinToss}()$ be a secure DKG protocol, let (\textproc{DVRF}, \textproc{VerifyVRF}) be a secure distributed input DVRF protocol, let $H_1$ and $H_2$ be collision resistant hash functions, let $(E,D)$ be a non-committing symmetric encryption scheme, and the employed signature scheme is strongly existentially unforgeable. Then the \sate protocol UC-realizes the ideal functionality $\Fsate$ defined in Figure~\ref{fig:ideal-func} in
the $(\FF_\mathsf{B},
\FF_\mathsf{anon},\FF_\textsf{syn})$-hybrid model.
\end{theorem}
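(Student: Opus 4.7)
The plan is to construct a PPT simulator $\mathcal{S}$ which, given only the information leaked by $\Fsate$ (registrations, bucket/match announcements, and final revelations), produces a transcript for the environment $\mathcal{E}$ that is computationally indistinguishable from the real-world transcript. Since the adversary $\adv$ is static and controls at most $f < n/2$ escrows (and arbitrarily many clients), $\mathcal{S}$ plays the role of all honest escrows and honest clients toward $\adv$, while internally relaying black-box to $\adv$ and forwarding corruptions. The proof then proceeds by a standard sequence of hybrid games, each step reducing to the security of one underlying primitive listed in the theorem statement.

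First, I would handle \emph{initialization and registration}. For each honest escrow, $\mathcal{S}$ runs the real \textproc{RandomCoinToss} and \textproc{PublicExponentiate} protocols against the corrupted escrows, using the simulators of the DKG and VSS schemes to extract the shares fed in by $\adv$; this gives $\mathcal{S}$ control over effective keys $SK_I$, $SK_R$, $\{SK_i\}$. When an honest client registers, $\mathcal{S}$ learns $\ID$ from $\Fsate$ (step~\ref{fig:ideal-func:reg-reveal}), generates a fresh $(pk, sk)$ internally, runs \textproc{VSS}$(H_2(pk))$ honestly, and simulates the \textproc{DVRF} calls; the transcript toward $\adv$ is identical in distribution to the real one up to the VSS/DVRF simulators' indistinguishability. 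When $\adv$ registers a corrupted client, $\mathcal{S}$ extracts $pk$ via the VSS simulator and forwards the corresponding \textsf{Register} call to $\Fsate$.

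Second, I would simulate \emph{allegation filing}. The interesting case is an honest client filing: $\mathcal{S}$ knows only $(t, \textproc{Unique}(\ID))$ and, eventually, the match announcements. It picks a random fresh $pk$ from that client's pool, computes the matching $\pi_{SK_I}(H_2(pk))$ (it controls $SK_I$), and broadcasts these over $\FF_\mathsf{B}$ routed through $\FF_\mathsf{anon}$. For the encrypted text $\Enc_k(a)$ and for the VSS of $m$ and $k$, it uses dummy values: the non-committing property of $\Enc$ lets $\mathcal{S}$ later equivocate the ciphertext to the real $a$ when $\Fsate$ discloses it in step~\ref{fig:ideal-func:final-reveal}, and the VSS secrecy lets the shares seen by $\adv$ be resampled consistently with any later-revealed $m,k$. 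The bucket PRF $F_{SK_{t-1}}(m)$ is set to the fresh nonce that $\Fsate$ returns via \textproc{Unique}$((t-1,m))$; because $\mathcal{S}$ controls $SK_{t-1}$, it drives \textproc{DVRF} to output exactly that value. This is indistinguishable from the real transcript by the pseudorandomness of the distributed input DVRF. For adversarial filings, $\mathcal{S}$ uses strong unforgeability of signatures and collision resistance of $H_2$ to argue that any accepted $\pi_{SK_I}(H_2(pk))$ must correspond to some previously registered $pk$ it extracted, and hence to a specific \ID, which it then submits to $\Fsate$.

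Third, I would handle \emph{matching, bucketing and revelation}. Whenever $\Fsate$ emits a move-to-bucket event with identifier \textproc{Unique}$((i, m))$, $\mathcal{S}$ drives the corresponding \textproc{DVRF} to return exactly that nonce when $m$ is new in bucket $i$, or the previously programmed value if the bucket-collection already contains a matching allegation; this is why $\Fsate$ routes \emph{all} match/mismatch information through \textproc{Unique} rather than revealing raw $m$. Upon a revelation in step~\ref{fig:ideal-func:final-reveal}, $\mathcal{S}$ equivocates the previously sent ciphertext to the real $a$ (non-committing encryption), opens the VSS of $k$ and $m$ consistently, and drives \textproc{DVRF}$(SK_R,\cdot)$ to the stored $R$ value so that {\tt identities}$[R] = \ID$ lookup resolves correctly. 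Finally, I conclude by a hybrid argument: starting from the real execution, I replace, in order, (i) VSS and DKG transcripts by their simulators, (ii) \textproc{DVRF} outputs on fresh $(SK_i, m)$ by random nonces via the DVRF pseudorandomness game, (iii) each $\Enc_k(a)$ by an equivocable ciphertext, and (iv) each honest client's $pk$ by a fresh random one; each step incurs only negligible distinguishing advantage, and the final hybrid coincides with the ideal execution $\mathsf{EXEC}_{\Fsate,\mathcal{S},\mathcal{E}}$.

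The main obstacle, I expect, is step (ii): establishing that the honest-generated \textproc{DVRF} values on distinct $(\sid, SK_i, m)$ triples look like the independent \textproc{Unique} nonces used by $\Fsate$, while simultaneously remaining \emph{consistent} within a bucket (equal inputs must map to equal outputs) and while $\adv$ holds $f$ shares of every $SK_i$ and may adversarially contribute to every \textproc{RandomCoinToss} and \textproc{Multiply}. Handling this requires invoking the distributed-input DVRF's pseudorandomness in a form that tolerates partial-key leakage, reducing to the $q$-DBDHI assumption behind the Dodis--Yampolskiy construction, and carefully accounting for the bucketing algorithm's intended leakage so that programming the simulated PRF outputs never contradicts a query $\adv$ could independently recompute.
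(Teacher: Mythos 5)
Your proposal is correct and follows essentially the same route as the paper's proof sketch: a simulator that controls the DKG secret keys, programs the distributed-input DVRF outputs to agree with the match/\textproc{Unique} information leaked by \Fsate (equal within a bucket for matching allegations, fresh pseudorandom otherwise), equivocates via non-committing encryption at revelation time, and extracts registered keys from corrupted allegers to map their filings to real identities. Your explicit hybrid sequence and your closing caveat about DVRF pseudorandomness under leakage of $f$ key shares make precise a step the paper's sketch asserts only informally, but they do not constitute a different proof strategy.
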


\section{Implementation and Evaluation}
\label{eval}

\paragraph{Implementation}
We build our prototype in Java, with our own implementation of the GRR MPC protocol~\cite{pederson-mpc}. We use SCAPI~\cite{scapi} version~2.3 for establishing communication channels and its bindings to OpenSSL~\cite{openssl} version~1.1, which use use for hashing, symmetric encryption and public-key encryption. We use the Java bindings to the Pairing Based Cryptography library, jPBC~\cite{jpbc-lib} version~2.0 for pairing based cryptography primitives. For operations in $\mathbb{Z}_q$, we use Java BigInteger. To maintain each \tescrow's persistent state, we use a MySQL database to achieve security, scalability and, security. To demonstrate scalability, we pre-populate the database with one million  allegations from one million distinct users. Since our computational complexity per allegation/registration does not depend on the number of pre-existing allegations/registrations, this imposes a negligible overhead on the protocol.

We now evaluate our implementation to show that \sates are fast enough for practical use.

\paragraph{Latency and throughput}
We first measure the latency and throughput of user-\sate interaction
in a realistic setting. We set up to 9 \tescrows on Amazon AWS cloud servers,
chosen to maximize geographical extent. In an experiment involving $n$
\tescrows, the \tescrows run on servers in the first $n$ of Virginia,
Frankfurt, Sydney, N. California, Singapore, Sao Paulo, London, Seoul,
and Mumbai. Each \tescrow runs on a M4.large AWS instance. At the time
of the experiments, this provided 2 vCPUs, 8GB of RAM, and `moderate'
network performance. Each server runs up to 60 threads, the maximum
supported on the machines; each thread handles one concurrent client
request. Note that the \sate registration is embarrassingly parallel with
respect to client requests---cost is dominated by network latencies
and MPC computation, which require no syncing across client requests;
such synchronization is needed only storing registered identities to database.
Allegation filing must be done serially one-by-one. We use up
to 60 client replicas, all hosted on a single c4.4xlarge instance of
AWS in Virginia. At the time of our experiments, this provides 16 vCPUs,
30GB RAM and `High' network performance.

\emph{Latency:} Figure~\ref{fig:real-eval} (top) shows the average
latency for registering a new key as the number of \tescrows varies,
in two configurations: When the \tescrows are lightly loaded (no
concurrent requests) and when they are heavily loaded (60 concurrent
clients). Latency is the time between when a user sends its request, to when it gets the \sate's MACs on its keys. There are three notable aspects here. First, as expected,
the latency increases with the number of \tescrows (since the MPC
becomes more complex). Second, increasing the number of concurrent
clients does not increase the latency significantly. This suggests
that the cost is \emph{dominated} by the number of \tescrows and
inter-escrow network latencies. Finally, even though the absolute
latency numbers might look high (of the order of 10s of seconds), they
are acceptable since user interaction with \sates is relatively
infrequent. In particular, users register new keys once every few
months, so such latencies seem quite practical. Latency is not a concern for filing an allegation, since the user does not expect any response from the \tescrows. The cost of matching and bucketing is better captured in terms of throughput.

\emph{Throughput:} Next, we measure the throughput of \sate in terms
of the number of key registrations and allegation filings it can
handle per second. For registration, we use 60 concurrent clients. Figure~\ref{fig:real-eval} (bottom) shows the throughput as a function of the number of \tescrows. Allegations are filed serially. As expected, the throughput number decreases with increasing number of \tescrows.

For allegation filing, each client repeatedly files allegations
with thresholds varying between 2 and 20, chosen from a truncated
exponential distribution with mean 5. When a threshold of $t$ is
chosen, $t$ matching allegations are created with 50\% probability,
and $t-1$ matching allegations are created the rest of the
time. These, respectively, represent the cases where the allegation is
eventually revealed and the worst-case (for performance) when the
allegation is not actually revealed.

We believe that these throughputs are acceptable for \sate, since user
operations are expected to be very infrequent. Moreover, each \tescrow
can be separately replicated on several servers to get proportionally
higher throughput.

\begin{figure}[!t]
  \includegraphics[width=\linewidth]{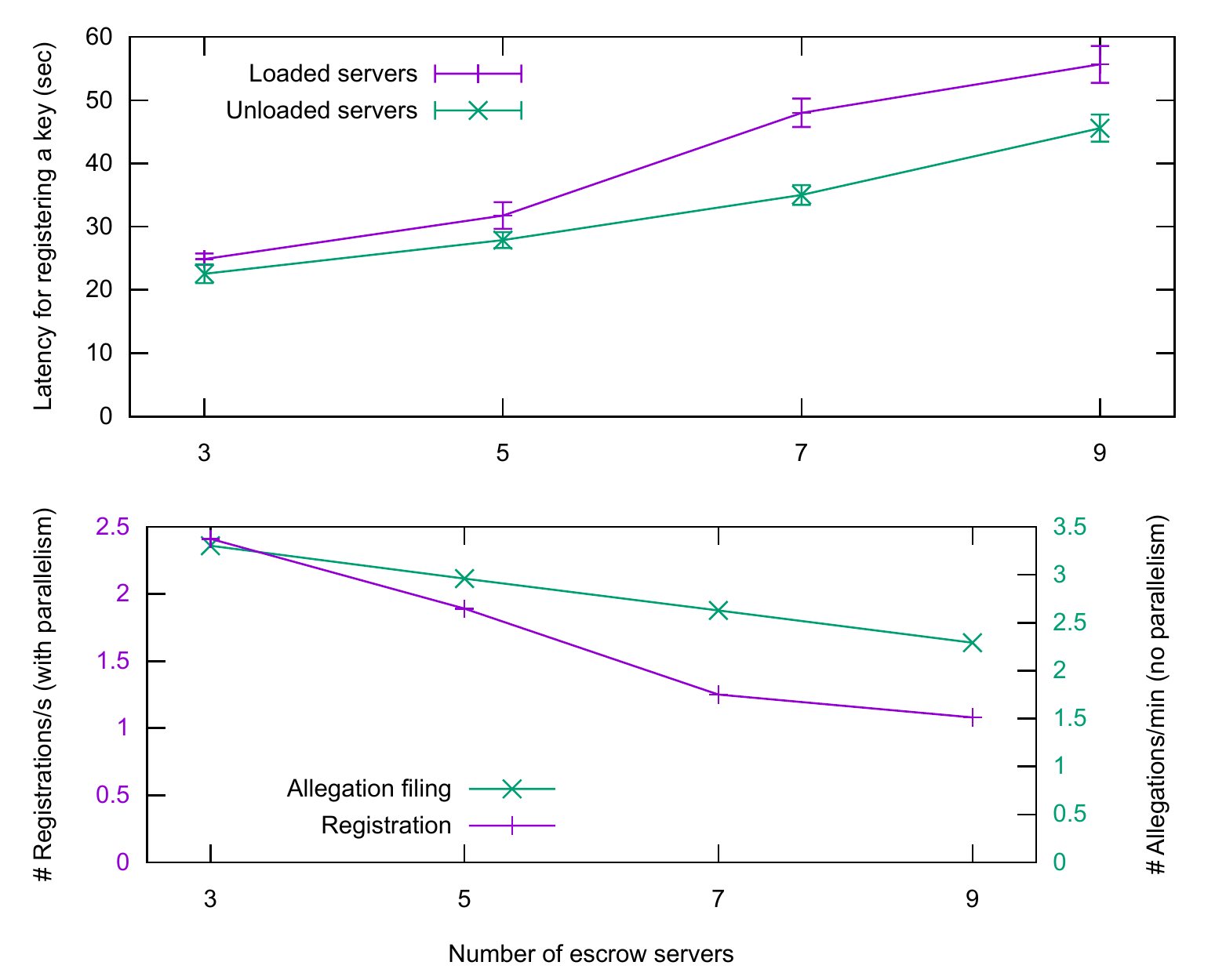}
  \caption{Latency and throughput when the \sate protocol is deployed on \tEscrow
    servers running on AWS instances in different locations. Registrations occur in parallel, while allegations are filed serially. Hence throughput is different for the two (note the units: seconds vs minutes).}
  \label{fig:real-eval}
\end{figure}

\begin{figure}[!t]
  \includegraphics[width=\linewidth]{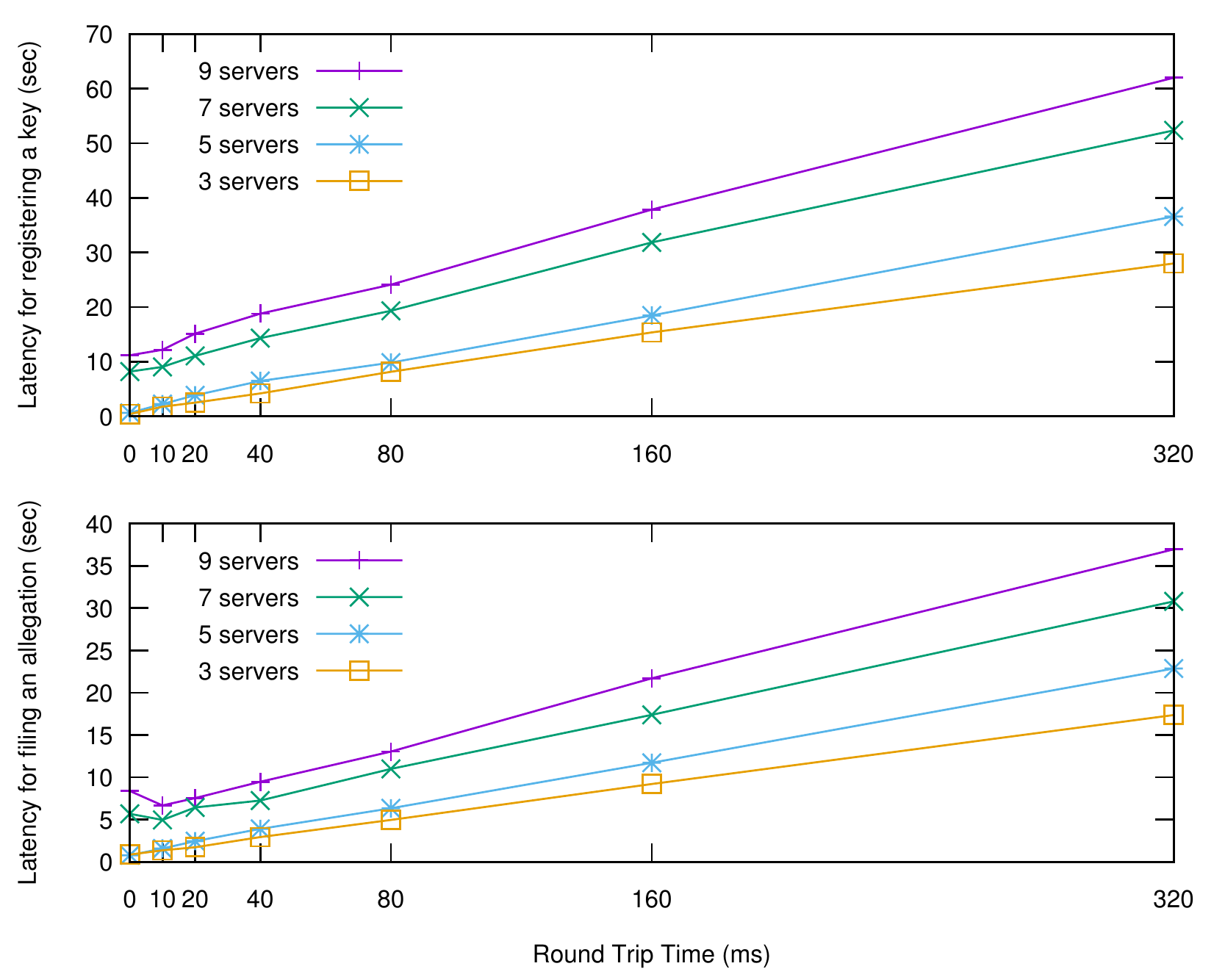}
  \caption{End-to-end latency for registering a key and filing an allegation as
    the communication delay between every pair of \tEscrows on an emulated
    network is varied. Values are shown for number of \tEscrows varying from 3
    to 9.}
  \label{fig:vary-rtt}
\end{figure}

\paragraph{Impact of network latency}
The primary source of user-perceived latency is inter-escrow network latency. To test this, we emulate a network with Linux qdiscs to get predictable performance on a single Amazon
AWS c4.4xlarge instance. The \tescrow servers and our client occupy
one core each. Every pair of \tescrow servers is given an emulated 100
Mbps link and 1 bandwidth$\times$delay worth of buffer (the recommended buffer size for TCP to obtain full link
utilization and minimal delay). We vary the latency of the
emulated network links and Figure~\ref{fig:vary-rtt} plots the latencies of a) registering a key,
and b) \emph{processing} one allegation completely  with no
matches. These require two and one VRF computations,
respectively. (Note that the user perceived latency of allegation
\emph{filing} is different from that of processing the
allegation. Filing does not require any PRF computation.) As expected, the client
latency increases linearly with the network latencies, and the rate of
increase also increases with the number of \tescrows.

\section{Discussion}
\subsection{Deployment Considerations}

\paragraph{Client Software} Users need client software to participate in the
protocol. It would be convenient if this software were part of a web-page.
However, it is challenging to access anonymity services like ToR from within a
browser (which suggests interesting future work). If users were required to
download special software instead, we expose other security concerns. The very
act of downloading the software could indicate an intent to file an allegation,
and not all users will use an anonymity service like ToR to do so. To prevent
this channel of inference, the client software should be bundled with other
commonly used software. Alternately, we can produce cover traffic by making a
fraction of all visitors of a popular web-page (e.g. organization's home page)
download the software.

\paragraph{Practical Security} Like all software, \sate will have security
vulnerabilities; there is no use in encrypting secrets codes a buffer-overflow
attack leaks the secret keys. In addition to careful code audits, we could make
multiple implementations that use independent hardware/software stacks and
compare their outputs to see they are identical. If not, we halt the system
until a security expert can find and fix the bug. This forces an attacker to
find the {\it same} vulnerability on different hardware/software stacks, which
is much more difficult. Such a heavy-handed approach is prudent here since
security is much more important than performance and availability.

\if 0
\paragraph{Escrow churn} Occasionally, we may wish to add or remove an escrow
from the system. Adding escrows improves security for any new allegations.
Escrows may need to be removed because they got caught behaving maliciously or
because they no longer wish to maintain the server. As long as only a minority
of escrows are corrupted/unavailable, the honest majority can reconstruct all
secrets. They can therefore reshare all secrets with a smaller threshold to be
able to do computations on them. However, for any MPC operations, $2f + 1$
escrows need to cooperate.
\fi


\paragraph{Non-technical considerations} Since \sate handles sensitive
information, its social design requires considerable thought. While a full
analysis is out-of-scope for this primarily technical paper, we discuss some
issues here. When thresholds for a set of allegations are met, to whom should
they be revealed? To avoid centralization, we could reveal them to the allegers
themselves. They can then coordinate and report to school authorities or law
enforcement if they so choose. On the other hand, if allegations went to a central
authority first, it could provide legal assistance to each alleger {\it
before} introducing them to each other. It could also filter out any obivously
fake/probe allegations. To keep this authority accountable, the escrows could
notify allegers that their allegation has been matched, but not to whom. Project
Callisto~\cite{project-callisto} currently follows a similar approach.

\sate forces allegers to create a paper-trail while filing an allegation to
discourage fake/probe allegations. We need effective legal mechanisms to make
this a significant deterrent. for more discussion of other social issues, we
refer the reader to prior social-science work~\cite{information-escrow-law,
project-callisto-report17}.

\subsection{Future Work} \paragraph{Withdrawing/modifying allegations} A user
can readily update allegation free-form text by sharing a new value. However,
\sate cannot always let them withdraw an allegation or modify its
metadata/reveal threshold. Since an allegation's threshold could be met as soon
as it is filed, allegers should only file one if they are comfortable with it
being revealed. Nevertheless, allegers may want to modify one. For example if an
allegation hasn't been revealed in several months/years, they may want to either
withdraw it or reduce its reveal threshold. In \sate, this is hard, since for
thresholds $> 2$, escrows may know that two allegations match even before they
are revealed. An adversary can use this to probe for allegations; if they can
withdraw their probe allegation, they can delete the paper trail that
disincentivizes such probes~(see~~\S\ref{proto:matching:thresholding}).

Note, if an allegation isn't yet known match any another, we can allow it to be
withdrawn. To do so, we must pick a new secret key for the bucket it is in and
recompute all PRFs with this new key. Since this is computationally expensive,
we can limit the overhead by recomputing PRFs at-most once per week. Users are
notified that withdrawal can take upto one week, which is acceptable in this
context. Fully supporting allegation withdrawal and modification is interesting
future work.

\paragraph{Other matching criteria} \sate matches allegations based on exact
string equality. Could we support other criteria? For instance, some allegers
may be victims of the crime and others may be witnesses. Can we support
different thresholds based on type of alleger? Could we match on multiple
fields, e.g. `match only if at-least two of (name/phone number/email
address/employee id) match', `match against this accused person only if the
crime happened within this time-frame/physical coordinates', or `match only
against allegations filed in the last year'. In applications where some
ambiguity is acceptable, it would be interesting to match based on softer
criteria provided by machine learning. E.g. based on a person's picture,
or a textual description. Note, these more complex criteria break transitivity.
That is `$A$ matches $B$' and `$B$ matches $C$' doesn't imply `$A$ matches $C$'.
Future work would need to define what it means for the thresholds for a group of
allegations to be satisfied.

\paragraph{Identity Management} To enable real-world identities, allegation
escrows require a robust public-key infrastucture (PKI), for which users should
validate their identity with a trusted authority. If a user registers
immediately before filing an allegation, then this act reveals an intent to file
an allegation. Hence the PKI must be established beforehand. \sate requires a
pre-registration step in addition to PKI, which is acceptable in many cases
since a new PKI must be established anyway; most organizations either don't have
one for their employees/students, or the ones they have aren't very robust. For
instance, some administrators may have access to employee logins/emails.
Nevertheless, it is interesting future work to explore how to effectively
exploit a pre-existing PKI to avoid a separate pre-registration for the escrow
system. For instance, if each person had a certificate, they could secret share
their identity and prove in zero-knowledge that they own the secret-shared
identity. Prior work demonstrates how to do this while being backwards
compatibile with X.509 certificates~\cite{cinderella} and email
services~\cite{email-blind-certificate}. Future work must establish that this is
practical and efficient. Additionally, many organizations use multi-factor
authentication. Exploiting this additional layer of security is also interesting
future work.

\section{Conclusion}
\label{conclusion}

We have presented \sate, a robust system that implements an allegation
escrow with strong cryptographic security guarantees, and showed that
it is practical. \sate keeps allegations and the identities of
\victims and the \accused confidential until \victim-specified
match-thresholds are reached. The system's security and privacy
guarantees provably hold as long as a majority of the escrow parties
are uncorrupted. Our empirical evaluation suggests that \sates are
efficient enough to be used in practice, and scales well to large numbers of users and allegations.

\section{Acknowledgements}

We thank the anonymous reviewers. We are also grateful to Krishna Gummadi, Prabhanjan
Ananth, Hari Balakrishnan, Derek Leung, Omer Paneth, Malte Schwarzkopf, Frank
Wang and Nickolai Zeldovich for many interesting discussions and valuable
feedback on the paper.

\bibliographystyle{IEEEtranS}
\bibliography{escrow,crypto}
\appendix
\subsection{Bilinear Pairings}
\label{app:crypto:bilinear}
Let $\mathbb{G}_1, \mathbb{G}_2, \mathbb{G}_T$ be multiplicative, cyclic groups
of prime order $q$. Let $g_1, g_2$ be generators of $\mathbb{G}_1, \mathbb{G}_2$
respectively. A map $e : \mathbb{G}_1 \times \mathbb{G}_2 \rightarrow
\mathbb{G}_T$ is called bilinear if it has the following properties. (1) {\bf
  Non-degenerate: } $e(g_1, g_2) \neq 1$. (2) {\bf Bilinear: } For all $u \in
\mathbb{G}_1, v \in \mathbb{G}_2, x, y \in \mathbb{Z}$, $e(u^x, v^y) = e(u,
v)^{xy}$. (3) {\bf Computable: } There is an efficient algorithm to compute
$e(u, v)$ for all $u \in \mathbb{G}_1, v \in \mathbb{G}_2$. For ease of
exposition, we assume that the pairing employed is symmetric, i.e.,
$\mathbb{G}_1 = \mathbb{G}_2 = \mathbb{G}$~\cite{GPS'08,DBDH}. Unless mentioned otherwise, $g \in \mathbb{G}$ is a publicly known generator.

\subsection{Verifiable Pseudorandom Functions (VRFs)}
\label{app:crypto:vrf}
VRFs cannot be distinguished from a random function by a computationally
bounded adversary that does not have access to the proof. For our purposes, we
adopt the following formal definition of a VRF from~\cite{pairing-vrf}.
Let $a_1 : \mathbb{N} \rightarrow \mathbb{N} \cup \{*\}$ and $a_2 :
\mathbb{N} \rightarrow \mathbb{N}$ be functions computable in $\mathrm{poly}(k)$
time\footnote{Except when $a_1$ takes the value $*$, which means the VRF is
  defined for inputs of all length.}. $F_{(\cdot)}(\cdot) : \{0,1\}^{a(\lambda)}
\rightarrow \{0,1\}^{b(\lambda)}$ is a family of VRFs if there exists a PPT
(probabilistic polynomial time computable) algorithm $GEN$ and deterministic
algorithms $PROVE$ and $VER$ such that $GEN(1^\lambda)$ outputs a pair of keys
$(SK, PK)$; $PROVE_{SK}(x)$ computes $(F_{SK}(x), \pi_{SK}(x))$, where
$\pi_{SK}(x)$ is a proof of correctness; and $VER_{PK}(x, y, \pi)$ verifies that
$y = F_{SK}(x)$. They satisfy the following properties: 1) {\it Uniqueness:} No
values $(PK, x, y_1, y_2, \pi_1, \pi_2)$ satisfy $VER_{PK}(x, y_1, \pi_1) = 1 =
VER_{PK}(x, y_2, \pi_2)$ when $y_1 \neq y_2$, 2) {\it Provability:} If $(y, \pi)
= PROVE_{SK}(x)$, then $VER_{PK}(x, y, \pi) = 1$ and, 3) {\it Pseudorandomness:}
For any PPT algorithm $A = (A_1, A_2)$ that does not query its oracle on $x$,
the following holds
\begin{displaymath}
  Pr\left[b = b' \left|
    \begin{array}{l}
      (SK, PK) \leftarrow GEN(1^\lambda),\\
      (x, st) \leftarrow A_1^{PROVE(\cdot)}(PK), \\
      y_0 \leftarrow F_{SK}(x), y_1 \leftarrow \{0,1\}^{a_2(\lambda)}, \\
       b = \{0,1\}, b' = A_2^{PROVE(\cdot)}(y_b, st)
    \end{array}
    \right. \right] \le \frac{1}{2} + \negl(\lambda)
\end{displaymath}
where $\negl(\cdot)$ is a negligible function. Further, it satisfies the following unpredictability property

For any PPT algorithm $A$, who does not query its oracle on $x$, the following
holds:
\begin{displaymath}
  Pr\left[y = F_{SK}(x) |
    \begin{array}{l}
      (PK, SK) \leftarrow GEN(1^\lambda);\\
      (x,y) \leftarrow A^{PROVE(\cdot)}(PK)
    \end{array}
      \right] \le negl(k)
\end{displaymath}

\subsection{Postponed Security Analysis}
\label{sec:uc-security-proof}

\begin{definition}
	\label{def1} ~\cite{Chou2015}.
A symmetric encryption scheme $(E, D)$ is non committing if there exist two PPT algorithms $(\mathcal{A}_1, \mathcal{A}_2)$ s.t.   $(c, k)$ and $(c', k')$ are computationally indistinguishable when $c' \leftarrow \mathcal{A}_1(1^\lambda)$, $k' \leftarrow \mathcal{A}_2(c', M)$, $k \leftarrow \mathcal{K}$ and $c \leftarrow E(k, M)$ for all $M \in \mathcal{M}$ where $\mathcal{K, M, C}$ denote key, message and ciphertext spaces respectively
\end{definition}

We refer \cite{Chou2015} for a simple construction.

\begin{proof}[Proof Sketch for Proof Theorem~\ref{thm:uc-security}]
Our proof strategy consists of the description of a simulator $\sim$ that handles users corrupted by the attacker and simulates the real world execution protocol while interacting with the ideal functionality \Fsate.

The simulator $\sim$ spawns honest users at adversarial will and impersonates them until the environment $\environment$ makes a corruption query on one of the users: At this point $\sim$ hands over to $\adv$ the internal state of the target user and routes all of the subsequent communications to $\adv$, who can reply arbitrarily. For operations exclusively among corrupted users, the environment does not expect any interaction with the simulator. Similarly, interactions exclusively among honest nodes happen through secure channels and therefore the attacker does not gather any additional information other than the fact that the interactions took place. For simplicity, we omit these operations in the description of our simulator.  The simulator simulates the following honest nodes: 1) the honest \tescrows, 2) the honest users, 3) the CA for users' real identities.
Next, we describe how the simulator behaves at various points of the protocol.

At several points in the \sate protocol, DKG is required. namely, $SK_I$ used to compute MACs on identities, $SK_R$ used for revealing alleger identity and $SK_i$ for each $i^{th}$ bucket used for thresholding. To simulate this with a minority of statically corrupted \tescrows, $\sim$ chooses a random key pair, performs DKG simulation~\cite[Theorem~1]{DKGJournal}, and sends the the public key to the corrupted \tescrows. As this simulation is exactly the distribution in the real protocol~\cite[Theorem~1]{DKGJournal}, and hence is indistinguishable from it. Notice that the simulator knows all the DKG secret keys here. It participates in computing $PK_I$ from $SK_I$. The simulator also generates the public-private key pairs for all the honest users and generates certificates for them from the CA.

For allegation filing and registration, we consider two cases depending on whether or not the alleger is honest.

\smallskip\noindent{\bf Case~1: Honest alleger, corrupted minority of \tescrows}

When an honest alleger registers, $\Fsate$ sends $(\register, \ID)$ to the the simulator. The simulator proves the honest alleger's identity to the corrupted \tescrows. This is possible because it simulates the CA and can generate arbitrary certificates. Then it generates $l$ new public keys $pk_1, \ldots, pk_l$ (note, figure~\ref{fig:sate-real} shows only $l=1$ for notational simplicity) and secret shares them among the \tescrows and participates in the distributed computation of $\pi_{SK_I}(H_2(pk_i))$ and $F_{SK_R}(H_2(pk_i))$ as described in~\S\ref{sec:Distributed} (note, the simulator knows $SK_I$ and $SK_R$). If the adversary refuses to participate in this computation, the simulator sends \fail to \Fsate from a corrupted escrow's channel and aborts. Else it sends {\tt OK}.\footnote{Henceforth, whenever the adversary makes the $i^{th}$ escrow fail, the simulator sends \fail from that escrow's channel. But we omit this detail for clarity.}. As in the real protocol, the adversary obtains $F_{SK_R}(pk_i)$, but not $\pi_{SK_I}(pk_i)$. So far, this is exactly what happens in the real protocol, except that DKG and the honest parties' private keys are chosen by the simulator, but from the same distribution. Hence it is indistinguishable from the real execution.

When an honest alleger files an allegation, $\Fsate$ sends $(\mathrm{``File"}, UID, t)$ to the simulator. The simulator chooses a random key-pair $(sk,pk) \leftarrow \mathrm{Gen}(1^\lambda)$, generates a MAC $\pi_{SK_I}(H_2(pk))$ on it and sends $("\mathrm{File}", pk, t)$ and $(\pi_{SK_I}(H_2(pk)), \mathcal{C})$ to the corrupted escrows signed using $sk$, where the $\mathcal{C}$ is a random non-committing encryption ciphertext. The simulator generates a random meta-data $m = H_1(meta-data)$ and symmetric key $k$, and distributes a minority of shares among the corrupted escrows as $VSS(m)$ and $VSS(k)$, signed with $sk$. The distribution of meta-data doesn't matter since it is information theoretically hidden from the adversary. Since the adversary has not seen the honest alleger's public key before, the simulator can choose a random one.
$\Fsate$ now moves to matching and thresholding, returning $(i, UID, Um)$ each time an allegation identified by $UID$ is added to bucket $i$. Let $pk$ be the public key the simulator chose for $UID$ (in the dishonest \victim case discussed below, the adversary provides $pk$, corresponding to which $\Fsate$ provides $UID$).

At this point, the real protocol would be computing $F_{SK_i}$({\tt allegations}$[pk].m$). The simulator can control the value of this result. If $Um$ matches any other allegations in bucket $i$, the simulator produces the value it previously returned for that allegation in bucket $i$. Else it produces a fresh random value. This works because $H_1$ is collision resistant, $H_1(m)=H_1(m')$ iff $m=m'$ for a computationally bounded adversary. Since $F$ is a PRF, the adversary cannot distinguish between its output and truly random numbers. Note all matching allegations have the same (hash of) meta-data $m$ by definition.  If at any point, the adversary refuses to cooperate in distributed-input DPRF computation, the protocol is aborted, and the simulator sends \fail to $\Fsate$, which also halts execution. Else it sends OK each time to move the protocol forward.

To reveal identity in the real protocol, the \tescrows compute $F_{SK_R}(H_2(pk))$, where $pk$ was the public key used during allegation. To simulate this, the simulator picks $pk_i$ randomly from the set of unrevealed public keys it chose when {\tt ID} was registered. It simulates the other \tescrows' behavior such that, if the adversary cooperates, it gets $F_{SK_R}(H_2(pk_i))$. Note, the simulator knows $SK_R$.  The simulator sends shares of the (non-committing) symmetric encryption key from honest \tescrows such that the ciphertext $C$ open to $a$ to the corrupted \tescrows. Allegation reveal now succeeds.

\smallskip\noindent{\bf Case 2: Corrupted alleger, corrupted minority of \tescrows}

During registration, the adversary provides a proof of $\ID$ from a CA to the simulator. It also sends the honest \tescrows' shares of hashes of $l$ public keys $H_2(pk_1),\ldots,H_2(pk_l)$ to the simulator. If the proof of $\ID$ is invalid, or the shares are incorrect (i.e. VSS verification fails), the simulator sends \fail to adversary. Else, it sends $(\register, ID)$ to \Fsate from the corrupted allegers' $\ID$. Note, the simulator has a majority of shares of $H_2(pk_i)$ and can hence reconstruct them. It also knows the secret keys $SK_I$ and $SK_R$. Hence it can participate in the computation of $\pi_{SK_I}(H_2(pk_i))$ and $F_{SK_R}(H_2(pk_i))$ on the $l$ public keys to produce the correct result. If the adversary refuses to participate in the computation, it sends \fail to \Fsate.

When filing an allegation, the alleger sends $(t, pk, \pi_{SK_I}(H_2(pk)), Enc_k(a))$ to the simulator for broadcasting. It secret shares the key $k$ and a collision-resistant hash of the meta-data, $m$. The simulator verifies that $pk$ has not been used before and verifies the MAC on it. If the check fails, the simulator sends \fail from the honest escrows to the corrupted alleger. If verification succeeds, the simulator determines the {\tt ID} with which $pk$ was registered (since it has all the registered keys), and connects to $\Fsate$ from {\tt ID}'s channel. It then invokes registration with $\Fsate$ with $(\mathrm{File}, m, a, t)$, which responds with $(\mathrm{File}, C, t)$\footnote{The simulator knows $(m, a)$ since it has a majority of the necessary shares. Again, if the shares are invalid, it sends \fail to the adversary as verifiable secret-sharing is used.}. Now the bucketing algorithm takes place, the simulation process for which is identical to the honest alleger case. $\Fsate$ returns matching allegations for various buckets, and we simulate for the corrupted escrows, a pseudo-random function on the meta-data. This is possible since we know, for the relevant buckets, meta-data of which allegations match.

When an allegation filed by a corrupted party is to be revealed, $\Fsate$ sends (``Reveal'', $C, t$) to the simulator. The simulator cooperates in computing $\pi_{SK_R}(pk)$, where $pk$ is the corresponding key used to file the allegation identified by $C$. If the adversary refuses to cooperate, the simulator sends \fail to \Fsate. Else, it sends {\tt OK}, and cooperates to reveal $a$, which it knows.
\end{proof}

\balance


\end{document}